\documentclass[a4pape,USenglish]{lipics-v2021}
\usepackage{algorithm}
\usepackage{amsmath}
\usepackage{algpseudocode}
\usepackage{tikz}
\usepackage[most]{tcolorbox}
\usepackage[colorinlistoftodos,textsize=tiny]{todonotes}

\newcommand{\adj}{\mathsf{adj}}
\newcommand{\inc}{\mathsf{inc}}
\newcommand{\red}{\mathsf{red}}
\newcommand{\blue}{\mathsf{blue}}
\newcommand{\id}{\mathsf{id}}
\newcommand{\td}{\mathsf{td}}
\newcommand{\tw}{\mathsf{tw}}
\newcommand{\w}{\mathsf{w}}
\newcommand{\leader}{\mathsf{leader}}
\newcommand{\diam}{\mathsf{diam}}
\newcommand{\parent}{\mathsf{parent}}

\newcommand{\depth}{\mathsf{depth}}

\newcommand{\cP}{\mathcal{P}}
\newcommand{\cC}{\mathcal{C}}
\newcommand{\selected}{\mathsf{Selected}}
\newcommand{\Gbase}{G^{\mathsf{base}}}
\newcommand{\Geq}[1]{G^{=#1}}
\newcommand{\Gleq}[1]{G^{\leq#1}}
\newcommand{\OPT}{\mathsf{OPT}}
\newcommand{\ARGOPT}{\mathsf{ARGOPT}}

\newcommand{\Mark}{\mathsf{Mark}}

\newcommand{\CONGEST}{{\rm \textsf{CONGEST}}}

\usepackage[appendix=append]{apxproof}




\bibliographystyle{plainurl}

\title{Distributed Model Checking on Graphs of Bounded Treedepth}

\author{Fedor V. Fomin}{University of Bergen, Bergen, Norway}{Fedor.Fomin@uib.no}{}{}{}
\author{Pierre Fraigniaud}{Université Paris Cité and CNRS, IRIF, Paris, France}{pierre.fraigniaud@irif.fr}{}{}{}
\author{Pedro Montealegre}{Universidad Adolfo Ibañez, Santiago, Chile}{p.montealegre@uai.cl}{}{}{}
\author{Ivan Rapaport}{Universidad de Chile, DIM-CMM, Santiago, Chile}{rapaport@dim.uchile.cl}{}{}{}
\author{Ioan Todinca}{Université d'Orléans and INSA Centre-Val de Loire, LIFO, Orléans, France}{Ioan.Todinca@univ-orleans.fr}{https://orcid.org/0000-0002-3466-859X}{}{}
\authorrunning{Fedor V. Fomin, Pierre Fraigniaud, Pedro Montealegre, Ivan Rapaport, Ioan Todinca} 

\Copyright{Fedor V. Fomin, Pierre Fraigniaud, Pedro Montealegre, Ivan Rapaport, Ioan Todinca} 


\begin{CCSXML}
<ccs2012>
<concept>
<concept_id>10003752.10003809.10010172</concept_id>
<concept_desc>Theory of computation~Distributed algorithms</concept_desc>
<concept_significance>500</concept_significance>
</concept>
<concept>
<concept_id>10003752.10003790.10011192</concept_id>
<concept_desc>Theory of computation~Verification by model checking</concept_desc>
<concept_significance>500</concept_significance>
</concept>
</ccs2012>
\end{CCSXML}

\ccsdesc[500]{Theory of computation~Distributed algorithms}
\ccsdesc[500]{Theory of computation~Verification by model checking}

\keywords{proof-labeling schemes, local computing, \textsf{CONGEST} model}

\hideLIPIcs
\nolinenumbers
\begin{document}
\maketitle

\begin{abstract}
We establish that every monadic second-order logic (MSO) formula on graphs with bounded treedepth is decidable in a constant number of rounds within the \textsf{CONGEST} model. To our knowledge, this marks the first meta-theorem regarding distributed model-checking. Various optimization problems on graphs are expressible in MSO. Examples include determining whether a graph $G$ has a clique of size $k$, whether it admits a coloring with $k$~colors, whether it contains a graph $H$ as a subgraph or minor, or whether terminal vertices in $G$ could be connected via vertex-disjoint paths. Our meta-theorem significantly enhances the work of Bousquet et al. [PODC 2022], which was focused on \emph{distributed certification} of MSO on graphs with bounded treedepth. 
Moreover, our results can be extended to solving optimization and counting problems expressible in MSO, in graphs of bounded treedepth.
\end{abstract}

\section{Introduction}

Distributed \emph{decision}~\cite{SarmaHKKNPPW12,FraigniaudGKPP14,FraigniaudGKS13} and distributed \emph{certification}~\cite{FraigniaudKP13,GoosS16,KormanKP10} are two complementary fields of distributed computing, closely associated  with distributed fault-tolerant computing~\cite{BlinFB23,Feuilloley21,KormanKM15}. Both fields are addressing the problem of checking whether a distributed system is in a legal state with respect to a given specification, or not. We examine this problem in the classical context of distributed computing in networks, under the standard \CONGEST\/ model~\cite{Peleg2000}. Recall that this model assumes networks modeled as simple connected $n$-node graphs, in which every node is provided with an identifier on $O(\log n)$ bits that is unique in the network. Computation proceeds synchronously as a sequence of \emph{rounds}. At each round, every node sends a message to each of its neighbors in the graph, receives the messages sent by its neighbors, and performs some individual computation. A crucial point is that messages are restricted to be of size $O(\log n)$ bits. 
 While this suffices to transmit an identifier, or a constant number of identifiers, transmitting large messages may require multiple rounds, typically $\Theta(k/\log n)$ rounds for $k$-bit messages.

\subsubsection*{Distributed Decision.} 

Given a boolean predicate~$\Pi$ on graphs, e.g., whether the graph is $H$-free for some fixed graph~$H$, a \emph{decision} algorithm for~$\Pi$ takes as input a graph $G=(V,E)$, and outputs whether $G$ satisfies $\Pi$ or not. Specifically, every node~$v$ receives as input its identifier $\id(v)$, and, after a certain number of rounds of communication with its neighbors, it outputs \emph{accept} or \emph{reject}, under the constraint that $G$ satisfies $\Pi$ if and only if the output of each of the nodes $v\in V$ is accept. In other words, 
\[
G \models \Pi \iff \forall v\in V(G), \; \mbox{output}(v)=\mbox{accept}.
\]
Some predicates are easy to decide  \emph{locally}, i.e., in a constant number of rounds. A canonical example is checking whether the (connected) graph $G$ is regular,  for which one round suffices. However, other predicates cannot be checked locally, with canonical example checking whether there is a unique node of degree~3 in the network. Indeed, checking this property requires $\Omega(n)$ rounds in networks of diameter $\Theta(n)$, as two nodes of degree~3 may be at arbitrarily large distances in the graph. Another example of a difficult problem is checking whether the graph is $C_4$-free, i.e., does not contain a 4-cycle as a subgraph, which requires $\tilde{\Omega}(\sqrt{n})$ rounds~\cite{DruckerKO13}.  One way to circumvent the difficulty of local checkability, i.e., to address graph predicates requiring a large number of rounds for being decided,  is to consider distributed \emph{certification}. 

\subsubsection*{Distributed Certification.} 

A \emph{certification scheme} for a boolean predicate~$\Pi$ is a pair \emph{prover-verifier} (see~\cite{Feuilloley21} for more details). The prover is a centralized, computationally unbounded, non-trustable oracle. Given a graph~$G=(V,E)$, the prover assigns a \emph{certificate} $c(v)\in\{0,1\}^\star$ to each node $v\in V$. These certificates are forged by the prover using the complete knowledge of the graph~$G$. The verifier is a distributed 1-round algorithm. Each node~$v$ takes as sole input its identifier $\id(v)$ and its certificate $c(v)$. In particular, for distributed decisions, $v$ is unaware of the graph~$G$. Every node~$v$ just exchanges once its identifier and certificate with its neighbors, and then it must output \emph{accept} or \emph{reject}. 

The certification scheme is correct if the following two conditions hold. The \emph{completeness} condition states that if $G$ satisfies~$\Pi$, then the oracle can provide the nodes with certificates that they all accept. The \emph{soundness} condition says that if $G$ does not satisfy~$\Pi$, then no matter the certificates assigned by the oracle to the nodes, at least one of them rejects.  That is, the role of the verifier is to check that the collection of certificates assigned to the nodes by the prover is indeed a global proof that the graph satisfies the predicate. In other words, 
\[
G \models \Pi \iff \exists c:V(G)\to \{0,1\}^\star \;:\; \forall v\in V(G),  \; \mbox{output}(v)=\mbox{accept}.
\]
The main measure of complexity of a certification scheme is the maximum \emph{size} of the certificates assigned by the prover to the nodes on legal instances, i.e., for graphs $G$ satisfying the given predicate. 
Ideally, to be implemented in a single round under the \CONGEST\/ model, the certificates should be of size $O(\log n)$ bits. Interestingly, many graph properties can be certified with such short certificates, including acyclicity~\cite{KormanKP10}, planarity~\cite{FeuilloleyFMRRT21},  bounded genus~\cite{EsperetL22,FeuilloleyF0RRT23}, etc. On the other hand, basic graph properties require large certificates, including diameter 2~vs.~3 (requiring $\tilde{\Omega}(n)$-bit certificates~\cite{Censor-HillelPP20}), non-3-colorability (requiring $\tilde{\Omega}(n^2)$-bit certificates~\cite{GoosS16}), $C_4$-freeness (requiring $\tilde{\Omega}(\sqrt{n})$-bit certificates~\cite{DruckerKO13}), etc. The following question was thus raised, under different formulations (see, e.g., \cite{FeuilloleyBP22}): \emph{What are the graph properties that admit certification schemes with $O(\log n)$-bit certificates, or, to the least, certificates of polylogarithmic size?} Answering this question requires formalizing the notion of ``graph predicate''.

\subsubsection*{Monadic Second-Order Logic.} 

Recall that, in the \emph{first-order} logic (FO) of graphs, a graph property is expressed as a quantified logical sentence whose variables represent vertices, with predicates for equality ($=$) and adjacency ($\adj$). An FO formula is therefore constructed according to the following set of rules, where $x$ and $y$ are vertices, and $\varphi$ and $\psi$ are FO formulas:  
\[
x=y \mid \adj(x,y) \mid \varphi\lor\psi \mid  \varphi\land\psi \mid \neg \varphi \mid \exists x \varphi \mid \forall x \varphi.
\]
As an example, triangle-freeness can be formulated as 
\[
\varphi = \neg  \exists x_1 \exists  x_2 \exists  x_3 \big (  \adj(x_1,x_2) \land \adj(x_2,x_3) \land \adj(x_3,x_1) \big ).
\]
The formula above assumes simple graphs (i.e., no loops nor multiple edges). If the graphs may have loops, then one should add the predicate $\neg(x_i=x_j)$ to the formula for every $i\neq j$. 

The \emph{monadic second-order} logic (MSO) extends FO by allowing quantification on \emph{sets} of vertices and edges, with the incidence predicate $\inc(v,e)$ indicating whether vertex $v$ is incident to edge $e$, and the membership ($\in$) predicate.
For instance, acyclicity can be formulated~as 
\[
\varphi = \neg  \exists X\neq \varnothing  \; \forall x\in X \; \exists y_1 \in X \; \exists y_2 \in X \big (  \neg (y_1=y_2) \land \adj(x,y_1) \land \adj(x,y_2) \big ).
\]
Note that $X\neq \varnothing$ can merely be written as $\exists x\in X$. Note also that acyclicity cannot be expressed in FO as the length of the potential cycle is unbounded, from which it follows that one cannot quantify on vertices only for expressing acyclicity, because one does not know \emph{how many} vertices should be considered. On the other hand, since FO can express properties such as $C_4$-freeness, which, as mentioned before, requires certificates on $\tilde{\Omega}(\sqrt{n})$ bits, there is no hope of establishing a \emph{meta-theorem} about FO regarding compact certification in all graphs. Nevertheless, a breakthrough in the theory of distributed certification was recently obtained by  Bousquet, Feuilloley, and Pierron~\cite{FeuilloleyBP22}, who showed that every MSO predicate admits a distributed certification scheme with $O(\log n)$-bit certificates in the family of graphs with bounded \emph{treedepth}.

\subsubsection*{Algorithmic Meta-Theorems.}

A vibrant line of research in sequential computing is the development of {algorithmic meta-theorems}. According to Grohe and Kreutzer \cite{GroheK09}, algorithmic meta-theorems assert that certain families of algorithmic problems, typically defined by some logical and combinatorial conditions, can be solved efficiently under some suitable definition of this term. Such theorems play an essential role in the theory of algorithms as they reveal a profound interplay between algorithms, logic, and combinatorics. One of the most celebrated examples of a meta-theorem is Courcelle's theorem, which asserts that graph properties definable in MSO are decidable in linear time on graphs of bounded treewidth \cite{Courcelle90}. For an introduction to this fascinating research area, we refer to the surveys by Kreutzer \cite{Kreutzer11algo} and Grohe \cite{Grohe07logi}. 

Bousquet, Feuilloley, and Pierron in \cite{FeuilloleyBP22} introduced the exploration of algorithmic meta-theorems in distributed computing. Their primary result in this direction is that any MSO formula can be locally \emph{certified} on graphs with bounded treedepth using a logarithmic number of bits per node, which represents the golden standard in certification. This theorem has numerous consequences for certification --- for more details, we refer to \cite{FeuilloleyBP22}. Notably, the FO property $C_4$-freeness, and the MSO property non-3-colorability, which both necessitate certificates of polynomial size in general, can be certified with just $O(\log n)$-bit certificates in graphs of bounded treedepth. Bousquet et al.'s result has been extended to more comprehensive classes of graphs, including graphs excluding a small minor \cite{BousquetFP21}, as well as graphs of bounded \emph{treewidth}, and graphs of bounded \emph{cliquewidth}. However, this extension comes at the cost of slightly larger certificates, of $O(\log^2n)$ bits, as seen in \cite{FraigniaudMRT24} and \cite{FraigniaudM0RT23}, respectively.

With significant advances in developing meta-theorems for distributed \emph{certification}, there's a notable absence of similar results for distributed \emph{decision}. It prompts a natural question: could such results be obtained for the round-complexity of \CONGEST? More concretely, the fundamental inquiry that remains unaddressed by Bousquet et al.'s paper, and by the subsequent works regarding distributed certification of MSO predicates is:

\begin{tcolorbox}[colback=green!5!white,colframe=gray!75!black]
\medbreak
\noindent\textbf{Question.} \emph{What is the round-complexity in \CONGEST\ of deciding MSO formulas in graphs of bounded treedepth?}
\medbreak
\end{tcolorbox}

A first step in answering this question was proposed in~\cite{NesetrilM16} where it is stated that, in any graph class of treedepth at most $d$, for every fixed connected graph~$H$, $H$-freeness can be decided in $O(1)$ rounds in \CONGEST. In this paper, we offer a comprehensive answer to the question. To elucidate our results, we first need to define the treedepth of a graph. 

\subsubsection*{Treedepth.} 

For any non-negative integer~$d$, a (connected) graph~$G$ has treedepth at most~$d$ if there exists a rooted tree~$T$ spanning the vertices of~$G$, with depth at most~$d$, such that, for every edge $\{u,v\}$ in~$G$, $u$ is an ancestor of $v$ in $T$, or $v$ is an ancestor of $u$ in $T$, 
see Fig.~\ref{fig:treedepthb}. The treedepth of a graph~$G$, denoted by $\td(G)$, is the smallest~$d$ for which such a tree exists. 

\begin{figure}
\includegraphics[scale=.3]{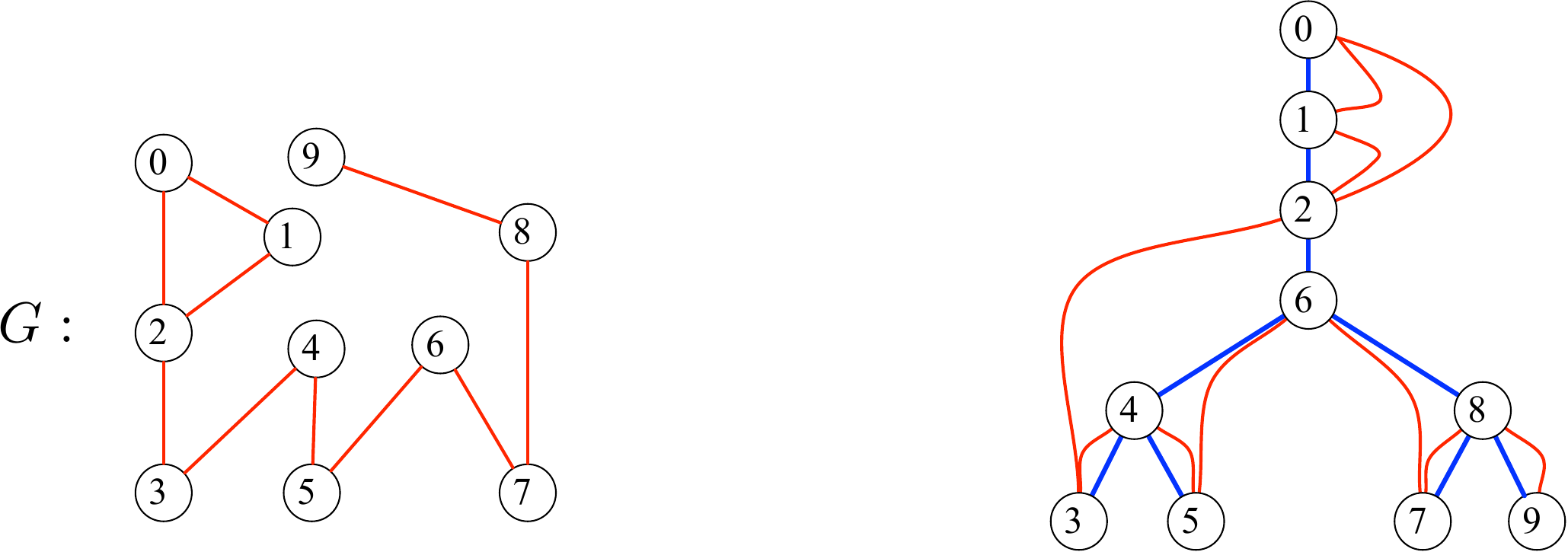}
\caption{Embedding of a graph $G$ into a tree $T$ of depth 6.}\label{fig:treedepthb}
\end{figure}

The class of graphs with bounded treedepth, i.e., of treedepth~$d$ for some fixed $d\geq 0$, has strong connections with minor-closed families of graphs. Specifically, for any family $\mathcal {F}$ of graphs closed under taking graph minors,  the graphs in $\mathcal {F}$ have bounded treedepth if and only if $\mathcal {F}$ does not include all the paths~\cite{NesetrilOdM12}. Similarly, the graphs with bounded 
treedepth have a finite set of forbidden induced subgraphs, and any property of graphs monotonic with respect to induced subgraphs can be tested in polynomial time on graphs of bounded treedepth~\cite{NesetrilOdM12}. Computing the treedepth of a graph is NP-hard, but since treedepth is monotonic under graph minors, it is fixed-parameter tractable (FPT)~\cite{GajarskyH15}. 
Last but not least, MSO and FO have the same expressive power in graph classes of bounded treedepth~\cite{ElberfeldGT16}.  

\subsection{Our Results}

We prove that, for every MSO formula~$\varphi$, there is an algorithm~$\mathcal{A}$ that, for every $n$-node graph~$G$, decides whether $G\models \varphi$ in $O(2^{2\td(G)})$ rounds in the \CONGEST\/ model.  That is, the round-complexity of $\mathcal{A}$ depends only on the treedepth of the input graph, and on the MSO formula, i.e., it does not depend on the \emph{size}~$n$ of the graph. Thus $\mathcal{A}$ performs a constant number of rounds in any class of graphs with bounded treedepth. In particular, deciding non-3-colorability can be done in $O(1)$ rounds in graphs of bounded treedepth, in contrast to general graphs, for which deciding non-3-colorability requires a polynomial number of rounds by~\cite{GoosS16}.

Our meta-theorem is essentially the best that one may hope regarding distributed model checking MSO formulas in a constant number of rounds in \CONGEST. Indeed, the FO predicate ``there is at least one vertex of degree~$>2$'' requires $\Omega(n)$ rounds to be checked in this class. 
Hence our theorem cannot be extended to graphs of bounded treewidth or bounded cliquewith, actually not even to bounded pathwith, and not even to the class $\mathcal{P}\cup\mathcal{B}$ where $\mathcal{P}$ is the set of all paths, and $\mathcal{B}$ is the set of all graphs composed of a path to which is attached a claw at one of its endpoints.
 
We also consider distributed model checking of \emph{labeled} graphs. For instance, one can check whether a given set of vertices is a feedback vertex set, i.e., whether the graph obtained by removing this set of vertices is acyclic. For such a predicate, it is sufficient to add a unary predicate to the logical structure used to mark the nodes, say $\mathsf{mark}(x)=true$ means that vertex $x$ is in the set. Using this unary predicate, $\varphi$ can express the fact that there are no cycles in $G$ passing only trough nodes $x$ for which $\mathsf{mark}(x)=false$. As another example, the fact that a graph is properly 2-colored can be expressed using two unary boolean predicates $\mathsf{red}$ and $\mathsf{blue}$, as 
\[
\varphi= 
 \Big (\forall x \; \big (\mathsf{red}(x)\lor \mathsf{blue}(x) \big )\Big) \land 
 \bigg ( \forall x,y \; \neg \Big( \adj(x,y) \land 
\big ((\mathsf{red}(x)\land \mathsf{red}(y)) \lor (\mathsf{blue}(x)\land \mathsf{blue}(y) \big)\Big )
\bigg ).
\]
Since we also deal with $\mbox{MSO}$, we can also label edges. For instance, one can check whether a given set of edges forms a spanning tree.  Indeed, it is sufficient to introduce a unary predicate used to mark the edges: $\mathsf{mark}(e)=true$ means that edge~$e$ is in the set. As for feedback vertex set, using this unary predicate,  $\varphi$ can express the fact that the set of marked edges is a spanning tree (i.e., every node is incident to at least one marked edge, and any two vertices are connected by a path of marked edges). We show that deciding  MSO formulas on \emph{labeled} graphs of bounded treedepth can be done in $O(1)$ rounds in the \CONGEST\/ model.

More generally, we also consider the \emph{optimization} variants of decision problems expressible in MSO on graphs of bounded treedepth. For instance, an independent set can be expressed as an MSO formula with a free variable~$S$, such as
$
\varphi(S) = \forall x\in S \; \forall y\in S \; \neg\adj(x,y).
$
Then, $max\varphi$, i.e., maximum independent set, consists in, given any graph~$G=(V,E)$, finding the largest set $S\subseteq V$ such that $G\models \varphi(S)$. We show that, for every MSO formula $\varphi(S)$ with free variable $S\subseteq V$ or $S\subseteq E$, there is an algorithm for graphs of bounded treedepth solving $max\varphi$ (and $min\varphi$) in a constant number of rounds in the \CONGEST\/ model. This constant is of the form $O(g(\td(G),\varphi))$ for some function $g$.
Due to the expressive power of MSO, our results yield algorithms with a constant number of rounds in the \CONGEST\/ model on graphs of bounded treedepth for numerous popular optimization problems including minimum vertex cover, minimum feedback vertex set, minimum dominating set, maximum independent set, maximum induced forest, maximum clique, maximum matching, minimum spanning tree, Hamiltonian cycle, cubic subgraph, planar subgraph, Eulerian subgraph, Steiner tree, disjoint paths, min-cut, minor and topological minor containment, rural postman, $k$-colorability, edge $k$-colorability, partition into $k$ cliques, and covering by $k$ cliques. We also extend our results to \emph{counting} problems, such as counting triangles or perfect matchings.

Finally, we briefly discuss some applications of our results to much larger classes of graphs, namely graphs of bounded \emph{expansion} (see~\cite{NesetrilOdM12} for an extended introduction). Graphs of bounded expansion include planar graphs, and more generally, all classes of graphs defined from excluding minor. It was shown~\cite{NesetrilM16} that, for every class $\mathcal{G}$ of graphs with bounded expansion, and every positive integer~$p$, there is an algorithms performing in $O(\log n)$ rounds under the \CONGEST\ model that partitions the vertex set~$V$ of any graph $G=(V,E)\in \mathcal{G}$ into $f(p)$ parts $V_1,\dots,V_{f(p)}$ such that every collection $V_{i_1},\dots,V_{i_q}$ of at most $p$ parts, $1\leq q \leq p$, $\{i_1,\dots,i_q\}\subseteq \{1,\dots,f(p)\}$, induces a (not necessarily connected) subgraph of $G$ with treedepth at most $p$. The function $f$ solely depends on the considered class~$\mathcal{G}$ of bounded expansion. The vertex partitioning $V_1,\dots,V_{f(p)}$ is called a low treedepth decomposition with parameter~$p$.
Plugging in our techniques into this framework, we show that, for every connected graph~$H$, $H$-freeness can be decided in $O(\log n)$ rounds under the \CONGEST\ model in any class of graphs with bounded expansion. This result was claimed in~\cite{NesetrilM16} with no proofs. We provide that claim with a complete formal proof. 

\subsection{Other Related Work}

The quest for efficient (sublinear) algorithms for solving classical graph problems in the CONGEST model  dates back to the seminal paper by Garay, Kutten and Peleg \cite{garay1993sub}, where 
an algorithm for constructing an MST was designed. Since then, a long series of problems have been addressed, such as 
connectivity decomposition \cite{censor2014distributed},
tree embeddings \cite{ghaffari2014near}
$k$-dominating set \cite{kutten1995fast},
stiener trees \cite{lenzen2014improved},
 min-cut \cite{ghaffari2013distributed,nanongkai2014almost}, max-flow \cite{ghaffari2015near}, shortest path \cite{henzinger2016deterministic,nanongkai2014distributed}, among others.
Additionally, algorithms tailored to specific classes of networks have also been developed: DFS for planar graphs \cite{ghaffari:2017}, MST for bounded genus graphs \cite{haeupler2016low}, MIS for networks excluding a fixed minor \cite{chang2023efficient}, etc.

Distributed certification is a very vivid topic, and many results have appeared since the survey~\cite{FeuilloleyF16}. A handful of recent papers considered \emph{approximate} variants of the problem, a la property testing \cite{Elek22,EmekGK22,FeuilloleyF22}. In particular, it was shown that every monotone (i.e., closed under taking subgraphs) and summable (i.e., stable by disjoint union) property~$\Pi$ has a compact approximate certification scheme in any proper minor-closed graph class~\cite{EsperetN22}. Other recent contributions dealt with augmenting the power of the verifier in certification schemes, which includes tradeoffs between the size of the certificates and the number of rounds of the verification protocol~\cite{FeuilloleyFHPP21}, randomized verifiers~\cite{FraigniaudPP19}, quantum verifiers~\cite{FraigniaudGNP21}, and several hierarchies of certification mechanisms, including games between a prover and a disprover~\cite{BalliuDFO18,FeuilloleyFH21}, interactive protocols~\cite{CrescenziFP19,KolOS18,NaorPY20}, and even zero-knowledge distributed certification~\cite{BickKO22}, and distributed quantum interactive protocols~\cite{GallMN23}.


\section{Treedepth and treewidth}

Throughout the paper, trees (or forests) are considered as rooted. The \emph{depth} of a tree is the number of vertices of a longest path from the root to a leaf. The depth of a forest is the maximum depth  among its trees.
Let us recall the definition of the treedepth. The interested reader can refer to the book of Ne\v{s}et\v{r}il and Ossona de Mendez~\cite{NesetrilOdM12} for further insights.

\begin{definition}[treedepth]
The \emph{treedepth} of a graph $G=(V,E)$ is the minimum depth of a forest $T=(V,F)$ on the same vertex set as~$G$, such that, for any edge $\{u,v\}$ of $G$, one of the endpoints is an ancestor of the other in the forest $T$.
Such a forest $T$ is also called \emph{elimination forest} of $G$.
\end{definition}

Observe that if $G$ is connected then the forest $T$ in the definition above is actually a tree.  
The following statement is an alternative, equivalent definition for treedepth. This recursive definition implicitly provides a recursive construction of an elimination tree.

\begin{lemma}[\cite{NesetrilOdM12}]\label{le:tdAlter}
The treedepth of a graph $G$ is:
\begin{equation*}
\td(G)=
\begin{cases}
1 &\mbox{if $G$ has a unique vertex},\\
1+\min_{v\in V(G)}\td(G-v) &\mbox{if }  G\text{ is connected},\\
\max\{\td(C)\mid C\text{ is a connected component of }G\} &\mbox{otherwise}.
\end{cases}
\end{equation*}
\end{lemma}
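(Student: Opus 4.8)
The plan is to verify the three branches of the recursion separately; note first that they are exhaustive, with the first branch taking precedence over the second. In each branch I would show that the asserted quantity is simultaneously an upper and a lower bound on $\td(G)$, i.e., on the minimum depth of an elimination forest of $G$. Almost all the work is done by two elementary operations on rooted forests. \emph{Operation (i) (restriction):} given $W\subseteq V(G)$ and an elimination forest $T$ of $G$, let $T[W]$ be the forest on $W$ in which the parent of $u\in W$ is the deepest proper $T$-ancestor of $u$ lying in $W$; then $T[W]$ is an elimination forest of $G[W]$ with $\depth(T[W])\le\depth(T)$. \emph{Operation (ii) (rooting):} given a fresh vertex $r$ and a forest $T'$, attaching the roots of $T'$ as the children of $r$ yields a tree of depth $\depth(T')+1$.

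The base case is immediate: the only forest on a one-vertex set is that vertex alone, a tree of depth $1$ that vacuously satisfies the elimination condition, so $\td(G)=1$. For the disconnected case, write $C_1,\dots,C_k$ for the connected components of $G$. For the inequality "$\le$", the disjoint union of optimal elimination forests of the $C_i$ is an elimination forest of $G$ — every edge of $G$ stays inside one component — of depth $\max_i\td(C_i)$. For "$\ge$", apply Operation (i) with $W=V(C_i)$ to an optimal elimination forest $T$ of $G$, obtaining $\td(C_i)\le\depth(T)=\td(G)$ for each $i$, hence $\max_i\td(C_i)\le\td(G)$.

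For the connected case with $|V(G)|\ge 2$: for "$\le$", fix $v\in V(G)$, take an optimal elimination forest $T'$ of $G-v$, and apply Operation (ii) with $r=v$; the resulting tree is an elimination forest of $G$ (edges missing $v$ are covered by $T'$, edges incident to $v$ are covered because $v$ is the root) of depth $\td(G-v)+1$, so $\td(G)\le 1+\min_{v}\td(G-v)$. For "$\ge$", take an optimal elimination forest $T$ of $G$; since $G$ is connected, $T$ is a single tree with some root $r$, and $T-r$ is an elimination forest of $G-r$ of depth $\depth(T)-1=\td(G)-1$ (deleting $r$ leaves ancestor relations among the other vertices unchanged), so $\td(G)\ge 1+\td(G-r)\ge 1+\min_{v}\td(G-v)$. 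Combining the two inequalities gives equality.

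The only delicate point is the correctness of Operation (i): one must check that "reparenting each vertex to its nearest ancestor inside $W$" produces a genuine forest, that its ancestor order still provides a comparable pair for every edge of $G[W]$ (which uses transitivity of the $T$-ancestor relation), and that its depth does not grow (every root-to-leaf chain of $T[W]$ embeds, in order, into a root-to-leaf path of $T$). Once this is established, everything else is routine bookkeeping with the two operations.
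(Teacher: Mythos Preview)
Your proof is correct. Note, however, that the paper does not actually prove this lemma: it is stated with a citation to~\cite{NesetrilOdM12} and used as a known fact, so there is no ``paper's own proof'' to compare against. Your argument is essentially the standard one (and the one implicit in the cited reference): the two inequalities in the connected case follow from rooting an optimal forest of $G-v$ at~$v$, and from deleting the root of an optimal tree for~$G$; the disconnected case is handled by restriction and disjoint union. The restriction operation you isolate is exactly the right tool, and your remark that each connected component of~$G$ must lie in a single tree of any elimination forest (since every edge forces its endpoints to be $T$-comparable, hence in the same tree) could be made explicit to streamline the ``$\ge$'' direction of the disconnected case, but the restriction argument you give covers it regardless.
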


\noindent On the other hand, tree decompositions and treewidth of graphs were introduced by Robertson and Seymour~\cite{RoSe84}.

\begin{definition}[treewidth]\label{de:treedec} 
A \emph{tree decomposition} of a graph $G = (V,E)$ is a pair $(T, B)$  where ${T=(I,F)}$ is a tree, 
and $B=\{B_i, i \in I\}$ is a collection of subsets of vertices of $G$, called \emph{bags}, such that the following conditions hold:
\begin{itemize}
\item For every vertex of $G$, there exists some bag containing this vertex;
\item For every edge $e$ of $G$ there is some bag containing both endpoints of $e$;
\item For every $v \in V$, the set $\{i \in I : v \in B_i\}$ of bags containing $v$ forms a connected subgraph of $T$. 
\end{itemize}
The \emph{width} of a tree decomposition is the maximum size of a bag, minus one. The \emph{treewidth} of a graph~$G$, denoted by~$\tw(G)$, is the smallest width of a tree decomposition of~$G$.
\end{definition}


It is known~\cite{NesetrilOdM12} that the treedepth of a graph is at least its treewidth. Given an elimination tree $T$ of a graph $G$, we can define a canonical tree decomposition associated to this same tree, such that the width of the decomposition corresponds to the depth of $T$, minus one. The following lemma is a straightforward consequence of the definitions of elimination trees and of tree decompositions.

\begin{lemma}[canonical tree decomposition] Let $T=(V,F)$ be an elimination tree of depth $d$ of a graph $G=(V,E)$. Let us associate to each node $u$ of $T$ a bag $B(u)$ containing $u$ and all the ancestors of $u$ in $T$. 
Then $T=(V,F)$, and the corresponding set of bags $(B_u)_{u \in V}$, form a tree-decomposition of $G$, of width $d-1$.
\end{lemma}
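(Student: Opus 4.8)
The plan is to verify directly the three defining conditions of a tree decomposition for the pair $(T, (B_u)_{u\in V})$, and then read off the width from the structure of the bags. Since each bag $B(u)$ is, by construction, the set of vertices lying on the path from the root of $T$ to $u$ (that is, $u$ together with all its ancestors), most of the verification reduces to elementary observations about ancestor/descendant relations in a rooted tree.

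First I would check vertex coverage: every vertex $v\in V$ lies in the bag $B(v)$, since $v$ is always included in its own bag. Next, for the edge condition, I would use the defining property of an elimination tree: for any edge $\{u,v\}$ of $G$, one endpoint is an ancestor of the other in $T$, say $u$ is an ancestor of $v$; then $B(v)$ contains $v$ and all ancestors of $v$, hence in particular $u$, so $\{u,v\}\subseteq B(v)$. For the connectivity (subtree) condition, the key observation is that, for a fixed vertex $w$, we have $w\in B(u)$ if and only if $w$ lies on the root-to-$u$ path, i.e.\ if and only if $u$ is a descendant of $w$ (with $w$ counted as a descendant of itself). The set of descendants of $w$ induces the subtree of $T$ rooted at $w$, which is connected; this establishes that $\{u\in V : w\in B(u)\}$ induces a connected subgraph of $T$.

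Finally, for the width bound, I would note that $|B(u)|$ equals the number of vertices on the root-to-$u$ path in $T$, i.e.\ the depth of $u$. Since $T$ has depth $d$, every bag has size at most $d$, so the width is at most $d-1$; and taking $u$ to be a leaf at maximum depth gives a bag of size exactly $d$, so the width is exactly $d-1$.

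I do not expect any genuine obstacle here: the statement is essentially a restatement of the definitions, and the only point requiring a moment's care is identifying, in the connectivity condition, the set of bags containing a given vertex $w$ with the subtree rooted at $w$. Everything else is a direct unwinding of the definition of $B(u)$.
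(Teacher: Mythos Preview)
Your proof is correct and is precisely the direct verification the paper has in mind; the paper itself does not spell out a proof, merely noting that the lemma is a straightforward consequence of the definitions of elimination trees and tree decompositions.
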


For instance, the treedepth of an $n$-vertex path $P_n$ is exactly $\lceil \log(n+1)  \rceil$ (see, e.g.,~\cite{NesetrilOdM12}).
The treedepth of a graph does not increase when we delete some of its  edges or vertices. Thus graphs of treedepth~$d$ have no paths on $2^d$ vertices. This observation yields the following lemma. 


\begin{lemma}\label{le:greedyDFS}
    Let $T=(V,F)$ be an elimination tree of a graph $G=(V,E)$ with $F \subseteq E$. Then the depth of $T$ is at most $2^{\td(G)}$. 
\end{lemma}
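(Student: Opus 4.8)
The plan is to exploit the hypothesis $F\subseteq E$, which forces every root-to-leaf branch of the elimination tree~$T$ to be an actual path in~$G$, and then to combine this with the known value of the treedepth of a path together with the monotonicity of treedepth under taking subgraphs (both recalled in the excerpt).

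First I would fix a longest root-to-leaf path $P=(v_1,v_2,\dots,v_k)$ in~$T$, where $v_1$ is the root and $v_k$ a leaf, so that $k$ equals $\depth(T)$. Since $v_i$ and $v_{i+1}$ are parent and child in~$T$, the pair $\{v_i,v_{i+1}\}$ belongs to~$F$, hence to~$E$ by hypothesis. Consequently the subgraph of~$G$ on vertex set $\{v_1,\dots,v_k\}$ with edges $\{v_i,v_{i+1}\}$ for $1\le i\le k-1$ is isomorphic to the path $P_k$ on $k$ vertices; in particular $G$ contains $P_k$ as a subgraph. (Note that a general elimination tree need not have this property; it is exactly the assumption $F\subseteq E$ that is used here.)

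Next I would invoke the two facts recalled earlier: treedepth does not increase when deleting vertices or edges, so $\td(P_k)\le\td(G)$; and the treedepth of the $k$-vertex path satisfies $\td(P_k)=\lceil\log(k+1)\rceil$. Combining these gives $\lceil\log(k+1)\rceil\le\td(G)$, hence $\log(k+1)\le\td(G)$, i.e.\ $k+1\le 2^{\td(G)}$, i.e.\ $k\le 2^{\td(G)}-1$. Since $k=\depth(T)$, we conclude $\depth(T)\le 2^{\td(G)}-1\le 2^{\td(G)}$.

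There is essentially no hard step here: the only point requiring attention is the observation that $F\subseteq E$ is what makes a root-to-leaf branch of~$T$ a genuine path subgraph of~$G$, after which the bound is immediate. If one prefers to avoid citing the exact formula $\td(P_k)=\lceil\log(k+1)\rceil$, the same conclusion follows from the weaker statement quoted in the excerpt that a graph of treedepth~$d$ contains no path on $2^d$ vertices, applied to the path subgraph on $\depth(T)$ vertices.
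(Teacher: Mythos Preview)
Your proposal is correct and follows essentially the same route as the paper: take a longest root-to-leaf branch of $T$, use $F\subseteq E$ to view it as a path subgraph of $G$, and bound its length via $\td(P_k)=\lceil\log(k+1)\rceil$ together with monotonicity of treedepth. The paper phrases this as a short contradiction argument, but the content is identical (and your direct version even yields the slightly sharper $\depth(T)\le 2^{\td(G)}-1$).
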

\begin{proof}
%
    Let $d=\td(G)$. Assume, for the purpose of contradiction, that $T$ has depth larger than~$2^d$. It follows that the longest path~$P$ in~$T$ from its root to a leaf contains at least $2^d$ vertices. The path $P$ is also a path in~$G$, so the treedepth of $P$ is at most the treedepth of~$G$, i.e., at most $d$. This is a contradiction with the fact that, for $n$-node paths, $\td(P_n) = \lceil \log(n+1)\rceil$.
\end{proof}

\section{Tree decompositions and $w$-terminal recursive graphs}

Courcelle's theorem~\cite{Courcelle90} states that any property expressible in MSO can be decided in linear (sequential) time on graphs of bounded treewidth. We use an alternative proof of Courcelle's theorem, by Borie, Parker, and Tovey~\cite{BoPaTo92}. Indeed, this proof provides us with an explicit dynamic programming strategy, which will be used in our distributed protocol.


Graphs  of bounded treewidth can also be defined recursively, based on a graph grammar. Let $w$ be a positive integer. A \emph{$w$-terminal graph} is a triple $(V,W,E)$ where $G=(V,E)$ is a graph, and $W \subseteq V$ is a \emph{totally ordered} set  of at most $w$ distinguished vertices. Vertices of $W$ are called the \emph{terminals} of the graph, and we denote by $\tau(G)$ the number of its terminals. As the terminal set $W$ is totally ordered, we can refer to the $r$th terminal, for $1\leq r \leq w$.  Moreover, since vertices are given distinct identifiers in \CONGEST, one can view $W$ as ordered by these identifiers.

The class of $w$-terminal \emph{recursive} graphs is defined, starting from $w$-terminal \emph{base} graphs, by a sequence of \emph{compositions}, or \emph{gluings}. A $w$-terminal base graph is a $w$-terminal graph of the form $(V,W,E)$ with $W=V$. A composition $f$ acts on two\footnote{The definition of~\cite{BoPaTo92} considers composition operations of arbitrary arity, i.e., they consider gluing on three or more graphs simultaneously, and they also consider a special gluing, on a single graph which allows to ``forget'' some terminals.  Technically, all these operations can be replaced by operations of arity~2, and we only use arity~2 for the sake of simplifying the presentation.} $w$-terminal graphs, and  produces a new $w$-terminal graph, as follows (see Figure~\ref{fig:SimpleTwGram} for an example\footnote{The figure is borrowed from~\cite{FraigniaudM0RT23} with the agreement of the authors.}, for $w=2$). 

The graph $G = f(G_1, G_2)$ is obtained by, first, making disjoint copies of the two graphs $G_1$ and $G_2$, and, second, ``gluing'' together some terminals of $G_1$ and $G_2$. In the gluing operation, each terminal of \(G_1\) is identified with at most one terminal of~\(G_2\). Formally, the gluing performed by  $f$ is represented by a matrix $m(f)$ having $\tau(G) \leq w$ rows, and two columns, with integer entries in $\{0,\dots,\tau(G)\}$. At row $r$ of the matrix, $m_{r,c}(f)$ indicates which terminal of each graph $G_c$, $c \in \{1,2\}$ is identified to the $r$th terminal of graph $G$. If $m_{r,c}(f) = 0$, then no terminal of $G_c$ is identified to terminal $r$ of $G$ (in particular, if  $m_{r,1}(f) = m_{r,2}(f) = 0$, then the $r$th terminal of $G$ is a new vertex; Nevertheless, this situation will not occur in our construction). Every terminal of $G_c$ is identified to at most one terminal of~$G$, i.e., each non-zero value in $\{1,\dots, \tau(G_c)\}$  appears at most once in the column $c$ of~$m(f)$. 


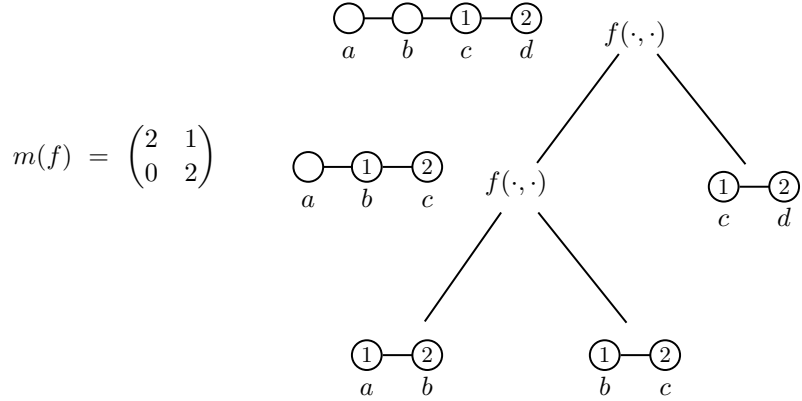
\begin{figure}[htbp]
\begin{center}
\tikzset{every picture/.style={line width=0.75pt}} 

\begin{tikzpicture}[x=0.75pt,y=0.75pt,yscale=-1,xscale=1]

\draw (460,60) node [anchor=north west][inner sep=0.75pt]    {$f( \cdot ,\cdot )$};

\draw (400,135) node [anchor=north west][inner sep=0.75pt]    {$f( \cdot ,\cdot )$};
\draw    (469.13,78) -- (427.88,133) ;
\draw    (533,133.63) -- (488.5,78) ;
\draw    (371.25,213) -- (409.75,158) ;
\draw    (473,213.63) -- (428.5,158) ;

\draw    (30+310.5,60) -- (30+325,60) ;
\draw    (30+340,60) -- (30+355,60) ;
\draw    (30+370,60) -- (30+385,60) ;

\draw (30+303,60) circle [x radius=  7.5, y radius= 7.5]   ;
\draw (30+303,60) node  [font=\footnotesize]  {$$};
\draw (30+303,85) node [anchor=south]{$a$};
\draw (30+332.5,60) circle [x radius=  7.5, y radius= 7.5]    ;
\draw (30+332.5,60) node  [font=\footnotesize]  {$$};
\draw (30+332.5,85) node [anchor=south]{$b$};
\draw (30+362, 60) circle [x radius=  7.5, y radius= 7.5]   ;
\draw (30+362, 60) node  [font=\footnotesize]  {$1$};
\draw (30+362,85) node [anchor=south] {$c$};
\draw (30+392.5,60) circle [x radius=  7.5, y radius= 7.5]    ;
\draw (30+392.5,60) node  [font=\footnotesize]  {$2$};
\draw (30+392.5,85) node [anchor=south]{$d$};

\draw    (30+340-50,135) -- (30+355-50,135) ;
\draw    (30+370-50,135) -- (30+385-50,135) ;

\draw (30+332.5-50,135) circle [x radius=  7.5, y radius= 7.5]    ;
\draw (30+332.5-50, 135) node  [font=\footnotesize]  {$$};
\draw (30+332.5-50, 160) node [anchor=south]{$a$};
\draw (30+362-50, 135) circle [x radius=  7.5, y radius= 7.5]   ;
\draw (30+362-50, 135) node  [font=\footnotesize]  {$1$};
\draw (30+362-50, 160) node [anchor=south] {$b$};
\draw (30+392.5-50,135) circle [x radius=  7.5, y radius= 7.5]    ;
\draw (30+392.5-50, 135) node  [font=\footnotesize]  {$2$};
\draw (30+392.5-50, 160) node [anchor=south]{$c$};

\draw    (30+370-50,230) -- (30+385-50,230) ;

\draw (30+362-50, 230) circle [x radius=  7.5, y radius= 7.5]   ;
\draw (30+362-50, 230) node  [font=\footnotesize]  {$1$};
\draw (30+362-50, 255) node [anchor=south] {$a$};
\draw (30+392.5-50, 230) circle [x radius=  7.5, y radius= 7.5]    ;
\draw (30+392.5-50, 230) node  [font=\footnotesize]  {$2$};
\draw (30+392.5-50, 255) node [anchor=south]{$b$};

\draw    (150+370-50,230) -- (150+385-50,230) ;

\draw (150+362-50, 230) circle [x radius=  7.5, y radius= 7.5]   ;
\draw (150+362-50, 230) node  [font=\footnotesize]  {$1$};
\draw (150+362-50, 255) node [anchor=south] {$b$};
\draw (150+392.5-50, 230) circle [x radius=  7.5, y radius= 7.5]    ;
\draw (150+392.5-50, 230) node  [font=\footnotesize]  {$2$};
\draw (150+392.5-50, 255) node [anchor=south]{$c$};

\draw    (210+370-50,145) -- (210+385-50,145) ;

\draw (210+362-50, 145) circle [x radius=  7.5, y radius= 7.5]   ;
\draw (210+362-50, 145) node  [font=\footnotesize]  {$1$};
\draw (210+362-50, 170) node [anchor=south] {$c$};
\draw (210+392.5-50, 145) circle [x radius=  7.5, y radius= 7.5]    ;
\draw (210+392.5-50, 145) node  [font=\footnotesize]  {$2$};
\draw (210+392.5-50, 170) node [anchor=south]{$d$};

\draw (215,130) node {$m( f) \ =\ \begin{pmatrix}
2 & 1\\
0 & 2
\end{pmatrix}$};

\end{tikzpicture}
\caption{Paths as 2-terminal recursive graphs.}
\label{fig:SimpleTwGram}
\end{center}
\end{figure}

A simple but crucial observation is that the number of possible different matrices, and hence of different composition operations~$f$, is bounded by a function of~$w$. Indeed the size of each matrix is at most~$2w$, and each entry of the matrix is an integer between $0$ and~$w$.

The class of $w$-terminal recursive graphs is exactly the class of graphs of treewidth at most $w-1$ (see, e.g., Theorem~40 in~\cite{Bodlaender98arb}). 

Let us briefly describe how a tree-decomposition of width $w-1$ of a graph $G$ can be transformed into a description of $G$ as a $w$-terminal recursive graph.  This construction will be crucial for efficiently deciding MSO properties of graph $G$. Let $T=(I,F)$ be a tree-decomposition of $G=(V,E)$ with bags of size at most $w$. The terminals correspond to the root bag. For every node $u$ of $T$, we use the following notations, depicted in Figure~\ref{fig:Gu}:
\begin{itemize}
    \item $T_u$ is the subtree of $T$ rooted at $u$;
    \item $B_u$ is the bag of node $u$, and $\Gbase_u = (G[B_u],B_u)$ is the $w$-terminal recursive base graph induced by bag $B_u$;
    \item $V_u$ is the union of all bags of $T_u$, and $G_u = (G[V_u],B_u)$ is the $w$-terminal graph induced by $V_u$, with $B_u$ as set of terminals. 
\end{itemize}

\begin{figure}
    \centering
    \includegraphics[scale=0.25]{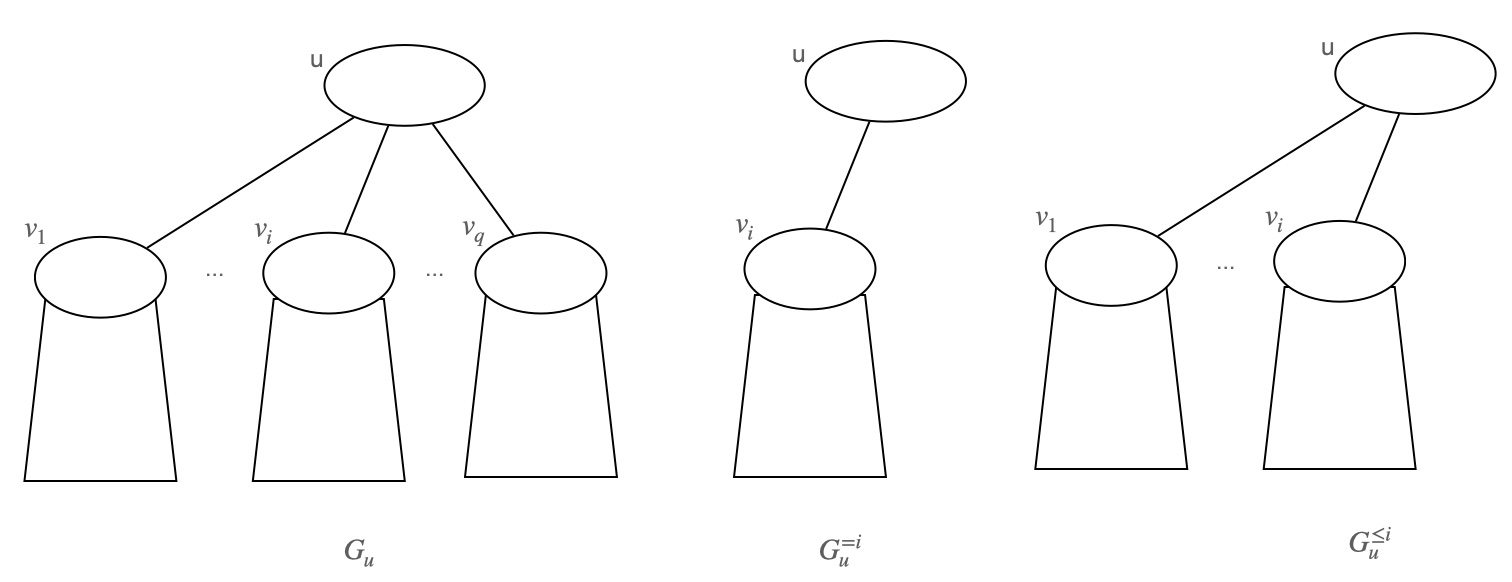}
    \caption{Tree-decompositions: graphs $G_u$, $\Geq{i}_u$ and $\Gleq{i}_u$}
    \label{fig:Gu}
\end{figure}

Let us now show that $G_u$ is indeed a $w$-terminal recursive graph. This is clear when $u$ is a leaf, since, in this case, $G_u = \Gbase_u$ is a base graph.
Assume that $u$ is not a leaf, and let $v_1, \dots, v_q$ be the children of node $u$ in~$T$. The ordering of the children is arbitrary, but fixed. Let us introduce two new families of $w$-terminal recursive graphs as follows. Both are having $B_u$ as set of terminals, and, for every $i\in\{1\dots, q\}$:
\begin{itemize}
    \item $\Geq{i}_u = G[B_u \cup V_{v_i}]$, and 
    \item $\Gleq{i}_u= G[B_u \cup V_{v_1} \cup \dots \cup V_{v_i}]$.
\end{itemize}
Observe that $\Geq{i}_u$ is obtained by gluing $G_{v_i}$ with the base graph~$G_u^{base}$. More precisely, 
\begin{equation}\label{eq:Geq}
 \Geq{i}_u = f_{(B_{v_i},B_u)} (G_{v_i},\Gbase_u), 
\end{equation}
 where the gluing operation $f(B_{v_i},B_u)$ glues the terminals of $B_{v_i} \cap B_u$ of $G_{v_i}$ to the corresponding terminals of~$B_u$, and the new set of terminals is $B_u$.
Also, for all $i\in\{1,\dots,q-1\}$, $\Gleq{i+1}_u$ is obtained by gluing $G_u^{\leq i}$ and $G_u^{=i+1}$ using the gluing function $f_{(B_u,B_u)}$ (which is the identity function on $|B_u|$ terminals), that is:
\begin{equation}\label{eq:Gleq}
 \Gleq{i+1}_u = f_{(B_u,B_u)} (\Gleq{i}_u,\Geq{i+1}_u).
\end{equation}
By construction, we get  $G_u = G_u^{\leq q}$.

\section{MSO logic and Courcelle's theorem}

Recall that, using monadic second-order (MSO) logic formulas on graphs, we can express graph properties such as ``G is not 3-colorable'' or ``G contains no triangles''. In order to solve optimization problems, we also consider MSO formulas with a free variable. That is, we consider formulas of the form $\varphi(S)$ where $S$ is a set of vertices, or a set of edges. The corresponding optimization problem aims at finding a set $S$ with maximum (or minimum, depending on the context) size satisfying $G \models \varphi(S)$. More generally, we may assume that the vertices, or edges of the input graph $G=(V,E)$ have \emph{polynomial} weights, that is the weight assignment $\w : V \cup E \to \mathbb{Z}$ satisfies that, for every $x\in V\cup E$,  $w(x)$ can be encoded on $O(\log n)$ bit. The problem $max\varphi$ then consists to compute the set $S$  with maximum weight satisfying $G \models \varphi(S)$. In this framework we can express problems like maximum (weighted) independent set, minimum (weighted) dominating set, or minimum-weight spanning tree (MST). 

\subsection{Regular Predicates, Homorphism Classes, and Composition}

To start, let us first consider  closed formulae only, i.e., with no free variable,  and formulas with just one free (edge or vertex) set variable. Using closed formulae, we can refer to graph predicates~$\cP(G)$, and, using formulas with free variables, we can refer to graph predicates~$\cP(G,X)$, where $X$ denotes a subset of vertices or a subset of edges of~$G$. For each possible assignment of $X$ with corresponding values, $\cP$ is either true or false.

Any composition operation $f$ over two $w$-terminal recursive graphs $G_1=(V_1,W_1,E_1)$, and  $G_2={(V_2,W_2,E_2)}$ naturally extends to a composition over pairs $(G_1,X_1)$, and 
$(G_2,X_2)$.
If $G=f(G_1,G_2)$, we denote by $\circ_f$ the composition over pairs. More precisely, 
$$\circ_f\big((G_1,X_1),(G_2,X_2)\big) = (G,X),$$ 
the operation being valid only under some specific conditions.
Let us consider the case when $X_1$ and $X_2$ are vertex-set variables. 
For each terminal $t$ of $G$, if terminals from both $G_1$ and $G_2$ were mapped to~$t$, say, terminals $t_1 \in W_1$ and $t_2 \in W_2$, then either $t_1 \in X_1$ and $t_2 \in X_2$, or $t_1 \not\in X_1$ and $t_2 \not\in X_2$. The set $X$ is interpreted as the union of $X_1$ and $X_2$, by identifying pairs of terminal vertices $t_1\in X_1$ and $t_2\in X_2$ that have been mapped on a same terminal of~$G$. For edge-sets, the set $X$ can also be seen as the union of two sets $X_1$ and $X_2$, up to gluing the  vertices specified by $f$. We refer to~\cite{BoPaTo92} for a  description of the gluing operation, and of the interpretation of the values of the variables.


\begin{definition}[regular predicate]\label{de:reg}
A graph predicate $\cP(G,X)$ is \emph{regular} if, for any value~$w$, we can associate to~$w$  
\begin{itemize}
\item a finite set $\cC$ of \emph{homomorphism classes},
\item an \emph{homomorphism function} $h$, assigning to each $w$-terminal recursive graph $G$, and  to any subset $X$ of vertices or edges of $G$, a \emph{class} $h(G,X) \in \cC$, and 
\item an \emph{update function} $\odot_f : \cC \times \cC \to \cC$ for each composition operation $f$, 
\end{itemize}
such that:
\begin{enumerate}
\item If $h(G_1,X_1) = h(G_2,X_2)$ then $\cP(G_1,X_1) = \cP(G_2,X_2)$;
\item For any two $w$-terminal recursive graphs $G_1$ and $G_2$, and any two sets $X_1$ and $X_2$,
 $$h\Big(\circ_f\big((G_1,X_1),(G_2,X_2)\big)\Big) = \odot_f\big(h(G_1,X_1),h(G_2,X_2)\big).$$
\end{enumerate}
\end{definition}

A class $c \in \cC$ is said to be an \emph{accepting class} if there exists $(G,X)$ such that $h(G,X)=c$, and $\cP(G,X)$ is true. By definition, the predicate $\cP$ holds for every $(G',X')$ such that $h(G',X')=c$. A non accepting class $c$ is called a \emph{rejecting class}. 
The same definitions applies to predicates $\cP(G)$, with no free variables.

\noindent
\textbf{Remark.}
Without loss of generality,  we may assume that, in Definition~\ref{de:reg}, the class $c = h(G,X)$ with $G=(V,W,E)$ encodes the intersection of $X$ with~$G[W]$. Indeed, since $W$ is of constant size, if $X$ is a vertex-set, then we can add the set of all the ranks of the elements in $X_j \cap W$, with respect to the totally ordered set $W$, to the class~$c$. And if $X_j$ is an edge-set, then we can store each edge of $X_j$ contained in $G[W]$ as the pair of ranks of its endpoints. 
In particular, we can assume that we are given a function $\selected(c,W)$ which, given a class $c$, and a set of terminals $W$, returns the unique intersection of $X$ with the vertices, or the edges, of $G[W]$.
\medskip


\begin{theorem}[\cite{BoPaTo92}]\label{th:reg}
Any predicate $\cP(G, X)$ expressible by an $MSO$ formula $\varphi(X)$ is regular. Moreover, given the formula $\varphi(X)$ and a parameter $w$, one can compute the set of classes~$\cC$, the update functions $\odot_f$ over all possible composition operations~$f$, as well as the homomorphism classes $h(G, X)$ for all base graphs $G$, and all possible values of variable~$X$. (The same holds for predicates $\cP(G)$ corresponding to closed formulas~$\varphi$.)
\end{theorem}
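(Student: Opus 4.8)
The plan is to prove Theorem~\ref{th:reg} by the classical Feferman--Vaught / Ehrenfeucht--Fra\"iss\'e route, taking the homomorphism class $h(G,X)$ to be the \emph{MSO type of $(G,X)$ up to quantifier rank $q$}, where $q$ is the quantifier rank of $\varphi$. Fix $w$ and $\varphi(X)$, with $X$ a free vertex-set or edge-set variable. I would view a $w$-terminal recursive graph $G=(V,W,E)$ together with a value of $X$ as a finite relational structure $\mathcal{M}(G,X)$ over the fixed finite vocabulary consisting of the adjacency and incidence relations, $w$ constant symbols $c_1,\dots,c_{\tau(G)}$ naming the terminals in their prescribed order, and one extra relation symbol interpreting $X$. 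Under this translation $\varphi(X)$ becomes a \emph{sentence} $\hat\varphi$ of quantifier rank $q$ over the expanded vocabulary, and $\cP(G,X)$ holds iff $\mathcal{M}(G,X)\models\hat\varphi$. Define $h(G,X)$ to be the equivalence class of $\mathcal{M}(G,X)$ under MSO-$q$-equivalence $\equiv_q$ (satisfying the same MSO sentences of quantifier rank $\le q$); if one wishes to match the Remark following Definition~\ref{de:reg}, one simply inflates $q$ so that the type also records $X\cap G[W]$, which only refines the type space. By the standard Hintikka-formula theory for MSO over a finite vocabulary, there are only finitely many $\equiv_q$-classes, each is definable by a single MSO sentence of quantifier rank $\le q$, and a finite list of representatives (the Hintikka sentences) is computable from $q$ and the vocabulary; this gives the finite, computable set $\cC$. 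Item~1 of Definition~\ref{de:reg} is then immediate, since $h(G_1,X_1)=h(G_2,X_2)$ implies $\mathcal{M}(G_1,X_1)\equiv_q\mathcal{M}(G_2,X_2)$, so either both satisfy $\hat\varphi$ or neither does, i.e. $\cP(G_1,X_1)=\cP(G_2,X_2)$.

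The heart of the argument is item~2, the update functions $\odot_f$, which amounts to a composition theorem for the gluing operation: \emph{the $\equiv_q$-class of $\circ_f\big((G_1,X_1),(G_2,X_2)\big)$ depends only on $f$ and on the $\equiv_q$-classes of $\mathcal{M}(G_1,X_1)$ and $\mathcal{M}(G_2,X_2)$}. I would prove this with the MSO Ehrenfeucht--Fra\"iss\'e game (with set moves): given winning Duplicator strategies for the $q$-round MSO game on $\big(\mathcal{M}(G_1,X_1),\mathcal{M}(G_1',X_1')\big)$ and on $\big(\mathcal{M}(G_2,X_2),\mathcal{M}(G_2',X_2')\big)$ — which exist whenever the pieces have equal types — one assembles a winning strategy on $\big(\mathcal{M}(G),\mathcal{M}(G')\big)$ for $G=\circ_f(\cdots)$, $G'=\circ_f(\cdots')$, by playing the two subgames in parallel: a Spoiler move (element or set) in the composed structure is split into its $G_1$-part and its $G_2$-part and answered componentwise, the only coupling being on the at most $w$ glued terminals, which are pinned down by the constants $c_r$ and hence handled identically on both sides. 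Care is needed because a subset $X$ of the composed graph must be decomposed as a \emph{compatible} pair $(X_1,X_2)$ — two sets agreeing on every glued terminal — and conversely only compatible pairs recombine into a subset of $G$; this is exactly the domain restriction meant by the phrase that $\circ_f$ ``is valid only under some specific conditions'', and $\odot_f$ is defined on the corresponding pairs of classes. Since every class is realized (its Hintikka sentence is satisfiable), $\odot_f$ can be read off from any choice of representatives, and it is well defined precisely by the composition theorem just established.

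The remaining computability claims follow routinely. For each of the finitely many operations $f$ (their number is bounded by a function of $w$, as already observed in the excerpt) and each pair of classes in $\cC$, one computes $\odot_f$ by selecting representatives, forming the composed structure, and computing its type by brute-force model checking against the finitely many Hintikka sentences; and $h(G,X)$ for a base graph $G$ is computed the same brute-force way, since a base graph has at most $w$ vertices so only finitely many pairs $(G,X)$ arise. The identical development, with the relation symbol for $X$ simply omitted, handles closed formulas $\varphi$ and predicates $\cP(G)$. The step I expect to be the main obstacle is the composition theorem for $\odot_f$: organizing the parallel EF-game bookkeeping cleanly, in particular making the treatment of set moves and of the shared glued terminals precise, and checking that it is insensitive to the presence of both a vertex sort and an edge sort and to whether $X$ is a vertex set or an edge set. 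Everything else — finiteness and effectiveness of the type space, the base-case computation, and item~1 — is straightforward once this congruence property is in hand.
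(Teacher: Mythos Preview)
The paper does not give its own proof of Theorem~\ref{th:reg}; the result is quoted from Borie, Parker, and Tovey~\cite{BoPaTo92} without argument, so there is no in-paper proof to compare against.

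Your proposal is correct and follows the semantic Feferman--Vaught route: take $h(G,X)$ to be the MSO $q$-type of $\mathcal{M}(G,X)$, get finiteness and computability of $\cC$ from Hintikka sentences, and derive $\odot_f$ from a composition theorem proved by running two MSO Ehrenfeucht--Fra\"iss\'e games in parallel on the pieces, with the glued terminals pinned by constants. This differs from the original argument in~\cite{BoPaTo92}, which proceeds by structural induction on the formula $\varphi$: atomic predicates are shown to be regular directly, and regularity is shown to be preserved under boolean combinations and under first- and second-order quantification, with $\cC$ and the $\odot_f$ built explicitly at each step. Their approach is more syntactic and yields a concrete, implementable description of the classes and update functions; yours is conceptually cleaner and makes the dependence on the quantifier rank and on $w$ transparent, at the cost of hiding $\odot_f$ behind ``choose representatives and recompute the type''. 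One small point to tighten: not every rank-$q$ Hintikka sentence is satisfiable, so ``every class is realized'' is not literally true if $\cC$ is the full set of Hintikka types; either restrict $\cC$ to realized types (decidable by enumeration of $w$-terminal graphs), or compute $\odot_f$ syntactically via the Feferman--Vaught reduction rather than through representatives.
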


Let us emphasize that the width parameter $w$, and the formula $\varphi$  in Theorem~\ref{th:reg} are \emph{constants}. Thereofore, the set of homomorphism classes $\cC$ is of constant size, and can be computed, as well as functions $\odot_f$ and homomorphism classes of base graphs, in \emph{constant} time. This constant just depends on~$w$ and on~$\varphi$. 

\subsection{Sequential Model-Checking}

We have all ingredients to describe the key ingredient of the sequential model-checking algorithm on graphs of bounded treewidth that we will use for designing our distributed protocol.  In a nutshell, the algorithm proceeds by dynamic programming, from the leaves of the decomposition-tree to the root. When considering a node~$u$, the program deals with the graph $G_u$ induced by all bags in the subtree rooted at~$u$, which is viewed as a $w$-terminal recursive graph with labels~$B_u$. The program computes the homomorphism class of $h(G_u)$ using merely the homomorphism classes of its children, the bags of its children, and the subgraph of~$G$ induced by the bag of~$u$.

\begin{lemma}[bottom-up decision]\label{le:decision}
Let $\cP(G)$ be a regular predicate on graphs, corresponding to a formula $\varphi$ with no free variables. 
Let $G$ be a graph, and let $T=(I,F)$ be a tree-decomposition of~$G$ with bags $\{B_u\mid u \in I\}$.
Let $u$ be a node of the tree decomposition, with children $v_1,\dots, v_q$ for $q\geq 0$.
The homomorphism class of $h(G_u)$ can be computed using only $\Gbase_u$, the values of $B_{v_i}$ and $h(G_{v_i})$ for all $i\in\{1,\dots,q\}$.
\end{lemma}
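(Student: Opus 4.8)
The plan is a direct computation based on the recursive expression of $G_u$ as a $w$-terminal recursive graph that was set up in Section~3, combined with the homomorphism property of regular predicates (item~2 of Definition~\ref{de:reg}). No induction is needed here: the homomorphism classes $h(G_{v_i})$ of the children are part of the input, and we only have to show how to combine them, together with $\Gbase_u$ and the bags $B_{v_i}$, into $h(G_u)$. I would split into the cases $q=0$ and $q\geq 1$.

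If $q=0$, then $u$ is a leaf and $G_u=\Gbase_u=(G[B_u],B_u)$ is a $w$-terminal base graph. By Theorem~\ref{th:reg}, the homomorphism classes of all base graphs are computable from $\varphi$ and $w$, so $h(G_u)=h(\Gbase_u)$ is obtained directly from $\Gbase_u$, in constant time.

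Suppose now $q\geq 1$. I would use identities \eqref{eq:Geq} and \eqref{eq:Gleq}. For each $i\in\{1,\dots,q\}$, \eqref{eq:Geq} states $\Geq{i}_u=f_{(B_{v_i},B_u)}(G_{v_i},\Gbase_u)$, so item~2 of Definition~\ref{de:reg} gives $h(\Geq{i}_u)=\odot_{f_{(B_{v_i},B_u)}}\big(h(G_{v_i}),h(\Gbase_u)\big)$. Here $h(G_{v_i})$ is part of the input; $h(\Gbase_u)$ is computable from $\Gbase_u$ as in the leaf case; and the composition operation $f_{(B_{v_i},B_u)}$, i.e.\ its gluing matrix, is entirely determined by the two ordered terminal sets $B_{v_i}$ and $B_u$, since two terminals are glued precisely when they carry the same identifier and terminals are ordered by identifier. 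Hence every $h(\Geq{i}_u)$ is computable from the available data. Then, by \eqref{eq:Gleq}, $h(\Gleq{1}_u)=h(\Geq{1}_u)$ and, for $i=1,\dots,q-1$, $h(\Gleq{i+1}_u)=\odot_{f_{(B_u,B_u)}}\big(h(\Gleq{i}_u),h(\Geq{i+1}_u)\big)$, where $f_{(B_u,B_u)}$ is the identity gluing on $|B_u|$ terminals, again determined by $B_u$. Iterating $q-1$ times yields $h(\Gleq{q}_u)$, and since $G_u=\Gleq{q}_u$ by construction, this is $h(G_u)$, as claimed.

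I do not expect a genuine obstacle: the only point needing a word of justification is that the gluings in \eqref{eq:Geq}--\eqref{eq:Gleq} are legal composition operations, i.e.\ that $V_{v_i}\cap B_u=B_{v_i}\cap B_u$ and that $G$ has no edge between $V_{v_i}\setminus B_{v_i}$ and $B_u\setminus B_{v_i}$, nor between $V_{v_i}\setminus B_u$ and $V_{v_j}\setminus B_u$ for $i\neq j$; these follow from the connectivity axiom of tree decompositions and the requirement that every edge lies inside a bag, and were already used when deriving \eqref{eq:Geq}--\eqref{eq:Gleq}. The content of the lemma is thus precisely the observation that the homomorphism property lets one push $h$ through the constant-length sequence of gluings that builds $G_u$ from the $G_{v_i}$'s and $\Gbase_u$.
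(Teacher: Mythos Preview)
Your proof is correct and follows essentially the same approach as the paper: compute $h(\Gbase_u)$ directly via Theorem~\ref{th:reg}, then use \eqref{eq:Geq} together with item~2 of Definition~\ref{de:reg} to obtain each $h(\Geq{i}_u)$, and finally iterate \eqref{eq:Gleq} to get $h(\Gleq{q}_u)=h(G_u)$. Your additional remark justifying why the gluings are legal composition operations is a welcome clarification but does not change the argument.
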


\begin{proof}
Observe that $h(\Gbase_u)$ can be computed directly by Theorem~\ref{th:reg}. In particular this settles the case when $u$ is a leaf. If $u$ is an internal node, then,
    for each $i=1,\dots, q$, we can compute $h(\Geq{i}_u)$ using $B_{v_i}, B_u, h(G_{v_i})$ and $h(\Gbase_u)$ as follows. By Equation~\ref{eq:Geq}, we have $\Geq{i}_u = f_{(B_{v_i},B_u)}(G_{v_i},\Gbase_u)$. Since $B_{v_i}$ and $B_u$ are known, one can construct the function $f_{(B_{v_i},B_u)}$, and, thanks to  Theorem~\ref{th:reg}, one can  retrieve the function $\odot_{f_{(B_{v_i},B_u)}}$. By Definition~\ref{de:reg}, $h(\Geq{i}_u) = \odot_{f_{(B_{v_i},B_u)}} (h(G_{v_i}),h(\Gbase_u))$. Since the parameters on the right-hand side of the equality are known, one can compute $h(\Geq{i}_u)$.

    Let us now show how to compute the values $h(\Gleq{i}_u)$. For $i=1$, $\Gleq{1}_u=\Geq{1}_u$, and thus $h(\Gleq{1}_u)=h(\Geq{1}_u)$. For every $i\geq 2$, one can compute $h(\Gleq{i}_u)$ using $B_u$, $h(\Gleq{i-1}_u)$, and $h(\Geq{i}_u)$. Indeed, by Equation~\ref{eq:Gleq}, $\Gleq{i}_u = f_{(B_u,B_u)} (\Gleq{i-1}_u,\Geq{i}_u)$, so again we have $h(\Gleq{i}_u) = \odot_{f_{(B_u,B_u)}}(h(\Gleq{i-1}_u),h(\Geq{i}_u))$, and all parameters on the right-hand side have been computed previously. 

Eventually, since $G_u = \Gleq{q}_u$, we get  $h(G_u) = h(\Gleq{q}_u)$.
\end{proof}

\subsection{Sequential Optimization}

Let us now describe how, given an MSO formula $\varphi(S)$ with a free variable $S$, one can solve the problem $max\varphi$, i.e., finding a set $S$ with maximum weight satisfying $G\models \varphi(S)$. Again, the algorithm proceeds  by dynamic programming, from the leaves of the decomposition tree to the root. Let us consider a node $u$ of the tree decomposition. 
For all homomorphism classes $c\in \cC$, the goal is to maximize the weight of a set $S_u$ such that $h(G_u,S_u)=c$, using the information of the same nature retrieved from the children of $u$ in the tree. The corresponding dynamic programming table is denoted by $\OPT(G_u)$. At the root we get obtain the maximum size of $S$ by choosing the maximum value over all accepting classes. 

\begin{definition}\label{de:opt}
    Let $\cP=(G,X)$ be a regular predicate over graphs and sets of vertices or edges, let $w$ be an integer, and let $\cC$ be the corresponding set of homomorphism classes over $w$-terminal recursive graphs. 
    We associate to each $w$-terminal recursive graph $G$ an \emph{optimization table} $\OPT(G)$ of $|\cC|$ entries, such that, for each $c \in C$,
    $\OPT(G)[c] = \max\{\w(X) \mid h(G,X)=c\}$.
    If no such set $X$ exists, we set $\OPT(G)[c] = -\infty$. 
\end{definition}

Observe (see also~\cite{BoPaTo92,FraigniaudMRT24}) that, if $G=(W,W,E)$ is a base graph, then
\begin{equation}\label{eq:optbase}
   \OPT(G)[c]  = \w(\selected(c,W)) 
\end{equation}
for each $c \in \cC$ (or $-\infty$ if no such $W$ exists).
%
%
%
%
%
If $G=f(G_1,G_2)$, then $\OPT(G)$ can be computed based on $\OPT(G_1), \OPT(G_2)$, and~$f$. 
Indeed, for each $c \in \cC$, we have 
\begin{equation}\label{eq:optcomp}
    \begin{split}  
    \OPT(G)[c]  = \max_{c_1,c_2 \in \cC \text{~s.t.~} c = \odot_f (c_1,c_2)}  \OPT(G_1)[c_1] &+ \OPT(G_2)[c_2] - \\
      & -\w(\selected(c_1,W_1) \cap \selected(c_2,W_2))
    \end{split}
\end{equation}
As previously, the maximum of empty sets is considered to be $-\infty$.
Similarly to Lemma~\ref{le:decision}, we have:

\begin{lemma}[bottom-up optimization]\label{le:optim}
Let $\cP(G,X)$ be a regular predicate on graphs, corresponding to a formula $\varphi(X)$ with a free  vertex-set or edge-set variable. 
Let $G$ be a graph, and let $T=(I,F)$ be a tree decomposition of $G$ with bags $\{B_u \mid u \in I\}$.
Let $u$ be a node of the tree decomposition, with children $v_1,\dots, v_q$, $q\geq 0$).
$\OPT(G_u)$ can be computed using only $\Gbase_u$, and the values $B_{v_i}$ and $\OPT(G_{v_i})$ for all $i\in\{1,\dots,q\}$.
Moreover, for each $c \in \cC$, one can compute the $q$-uple of homomorphism classes $\ARGOPT_u[c] = (c_1,\dots c_q)$ such that the optimal partial solution $X_u$ of $G_u$ satisfying $h(G_u,X_u)=c$ was obtained by gluing optimal partial solutions $X_{v_i}$ of $G_{v_i}$ satisfying $h(G_{v_i},X_i)=c_i$, for all $i\in\{1,\dots,q\}$.
\end{lemma}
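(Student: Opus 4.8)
The plan is to mirror the argument of Lemma~\ref{le:decision} almost verbatim, replacing the scalar homomorphism classes $h(G_u)$ by the optimization tables $\OPT(G_u)$, and using Equations~\ref{eq:optbase} and~\ref{eq:optcomp} in place of the single composition step. First I would handle the base case: if $u$ is a leaf, then $G_u = \Gbase_u$ is a base graph, so by Equation~\ref{eq:optbase} we have $\OPT(G_u)[c] = \w(\selected(c,B_u))$ for every accepting assignment, and $-\infty$ otherwise. Since $\selected$ and the set of classes $\cC$ are computable by Theorem~\ref{th:reg} (in constant time, as $w$ and $\varphi$ are fixed), this is computed from $\Gbase_u$ alone. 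I would also record the trivial $\ARGOPT$ (empty tuple) in this case.

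Next, for an internal node $u$ with children $v_1,\dots,v_q$, I would follow the exact decomposition of $G_u$ into the graphs $\Geq{i}_u$ and $\Gleq{i}_u$ used in Lemma~\ref{le:decision}. For each $i$, Equation~\ref{eq:Geq} gives $\Geq{i}_u = f_{(B_{v_i},B_u)}(G_{v_i},\Gbase_u)$; since $B_{v_i}$ and $B_u$ are known, the composition operation $f_{(B_{v_i},B_u)}$ and hence (by Theorem~\ref{th:reg}) the update function $\odot_{f_{(B_{v_i},B_u)}}$ are available, so applying Equation~\ref{eq:optcomp} to $\OPT(G_{v_i})$ and $\OPT(\Gbase_u)$ yields $\OPT(\Geq{i}_u)$. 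Then I would compute $\OPT(\Gleq{i}_u)$ inductively: $\OPT(\Gleq{1}_u)=\OPT(\Geq{1}_u)$, and for $i\geq 2$, Equation~\ref{eq:Gleq} gives $\Gleq{i}_u = f_{(B_u,B_u)}(\Gleq{i-1}_u,\Geq{i}_u)$, so one more application of Equation~\ref{eq:optcomp} (with the identity gluing on $B_u$) produces $\OPT(\Gleq{i}_u)$ from the previously computed tables. Since $G_u = \Gleq{q}_u$, this gives $\OPT(G_u)$, and every quantity used is among $\Gbase_u$ and the pairs $(B_{v_i},\OPT(G_{v_i}))$, as claimed. (When $q=0$ we fall back to the base case.)

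For the $\ARGOPT$ part, I would augment each application of Equation~\ref{eq:optcomp} to also store a witness: whenever $\OPT(\cdot)[c]$ attains its maximum via a pair $(c_1,c_2)$ with $c=\odot_f(c_1,c_2)$, record that pair. Unfolding the chain $\Geq{1}_u,\Gleq{2}_u,\dots,\Gleq{q}_u$ then lets one reconstruct, for each target class $c$, the tuple $(c_1,\dots,c_q)$ of classes of the children such that gluing optimal partial solutions $X_{v_i}$ with $h(G_{v_i},X_{v_i})=c_i$ realizes the optimum $\OPT(G_u)[c]$; the correction term $\w(\selected(c_1,W_1)\cap\selected(c_2,W_2))$ in Equation~\ref{eq:optcomp} accounts for double-counting of the glued terminal vertices/edges, which is exactly why the recorded tuple yields a genuine (not over-counted) weight. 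I expect no serious obstacle here — the only mildly delicate point is bookkeeping the terminal-intersection correction so that the reconstructed global $X$ has the advertised weight, and checking that the iterated two-way gluings indeed compose to $G_u$ with terminal set $B_u$, which is already established in the construction preceding the lemma.
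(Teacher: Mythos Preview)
Your proposal is correct and follows essentially the same approach as the paper: mirror the proof of Lemma~\ref{le:decision}, replacing $h(\cdot)$ by $\OPT(\cdot)$ and invoking Equations~\eqref{eq:optbase} and~\eqref{eq:optcomp} at the base and composition steps, then record the maximizing pair $(c_1,c_2)$ at each application of~\eqref{eq:optcomp} to recover $\ARGOPT_u[c]$. The paper's own proof is in fact slightly terser than yours, but the structure and all the key ingredients are identical.
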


\begin{proof}
The proof is very similar to the one of Lemma~\ref{le:decision}. Again, $\OPT(\Gbase_u)$ can be computed by brute force (Eq.~\eqref{eq:optbase}), which also settles the case when $u$ is a leaf.
If $u$ is not a leaf, then we first compute $\OPT(\Geq{i}_u)$ for each $1\leq i \leq q$. Then we use the fact that, thanks to Equation~\ref{eq:Geq},  $ \Geq{i}_u = f_{(B_{v_i},B_u)}(G_{v_i},\Gbase_u)$, and the computation of $\OPT(\Geq{i}_u)$ is performed through Eq.~\eqref{eq:optcomp}.
For $i\geq 2$, we compute $\OPT(\Gleq{i}_u)$ from $\OPT(\Gleq{i-1}_u)$ and $\OPT(\Geq{i}_u)$, based on Equation~\ref{eq:optcomp}, and the fact that $\Gleq{i}_u = f_{(B_u,B_u)} (\Gleq{i-1}_u,\Geq{i}_u)$ (cf. Eq.~\eqref{eq:Gleq}).
Eventually, $\OPT(G_u) = \OPT(\Gleq{q})_u$.
Note that, at every application of Eq.~\eqref{eq:optcomp}, one can memorise, for each homomorphism class $c$ of the glued graph, the classes of the two subgraphs that produced the maximum weight. In particular, one can deduce the classes $(c_1,\dots,c_q) = \ARGOPT_u[c]$ for each class $c \in C$.
\end{proof}

\begin{algorithm}[htb!]
\caption{Sequential decision/optimization for regular $\cP$ on $G$}
\label{algo:sequential}
\begin{algorithmic}[1]
\Require tree decomposition $T=(V_E,F_T)$ of width $w-1$ of $G$ ; formula $\varphi$ and the corresponding homomorphism classes $\cC$, homomorphism function $h$ on base graphs, composition functions $\odot_f$, function $\selected$\Comment{See Theorem~\ref{th:reg}}

\bigskip

\State \textbf{Bottom-up phase} on tree $T$, computes $h(G_u)$/$\OPT(G_u),\ARGOPT_u$ for each node $u$
\For {each node $u$ of $T$ from the bottom to the root}
    \State let $v_1,\dots, v_q$ be the children of $u$ in $T$   \Comment{These nodes have been treated before $u$}
    \State // Decision problems:
    \State Compute $h(G_u)$ from $\Gbase_u$, $B_{v_i}$ and $h(G_{v_i})$, $1\leq i \leq q$ using Lemma~\ref{le:decision} \label{i:decision}
    \State // Optimization problems:
    \State Compute $\OPT(G_u)$ and $\ARGOPT_u$ from $\Gbase_u$, $B_{v_i}$ and $\OPT(G_{v_i})$, $1\leq i \leq q$ by Lemma~\ref{le:optim} \label{i:optim}    

\EndFor

\bigskip
\State \textbf{Decision at the root $r$ (decision problems only).} Return $\textrm{true}$ if $h(G_r)$ is an accepting class, otherwise return $\textrm{false}$.

\bigskip

\State \textbf{Top-down phase (optimization only)}\label{i:top-down}
\For {each node $u$ of $T$ from the top to the bottom}
    \If{$u$ is the root}
        \State choose class $c_u\in C$ such that $c_u$ is accepting and $\OPT(G_u)[c]$ is maximized
        \State \textbf{if} no such class exists \textbf{then} reject \textbf{end if}
    \Else
        \State $c_u$ is the class received by $u$ from its parent in $T$
    \EndIf
    \State // Selects edges/vertices of the optimal solution:
    \State Mark all edges/vertices of $\selected(c_u,B_u)$ as “selected" 
    \If{$u$ is not a leaf}
        \State let $v_1,\dots, v_q$ be the children of $u$ in $T$ 
        \State let $(c_1,\dots, c_q) = \ARGOPT_u[c_u]$ \Comment{$c_i$ is the optimal class for child $v_i$, see Lemma~\ref{le:optim}}
        \State Send to each $v_i$ value $c_i$, for all $i=1,\dots,q$
    \EndIf
\EndFor\label{i:end}
\end{algorithmic}
\end{algorithm}

\subsection{The Sequential Algorithm}

Algorithm~\ref{algo:sequential}  simultaneously presents the model-checking of regular predicates on graphs, and the optimization protocol for predicates on graphs and sets. 

For the decision problem, we simply compute bottom-up, for each node $u$, the class of $h(G_u)$ using Lemma~\ref{le:decision}. At the root $r$, the algorithm accepts if $h(G_r)$ is an accepting class.

For optimization problems we need to construct bottom-up the full optimization tables $\OPT(G_u)$. At the root, it suffices to find an accepting class $c_r$ maximizing $\OPT(G_u)$. Then the value $\OPT(G_u)[c_r]$ is the maximum size of set $X$ satisfying $\cP(G,X)$. In order to retrieve the optimal set $X_{\OPT}$ itself, we ``roll back'' the whole dynamic programming process, in  a top-down phase.  Specifically, at the root node $u=r$, we choose the class $c_u$ with maximum value of $\OPT(G_u)$ over all accepting classes. In particular, $\selected(c,B_u)$ indicates the vertices, or edges, of $G[B_u]$ contained in the optimal solution. Moreover, by Lemma~\ref{le:optim}, $\ARGOPT_u[c]$ provides the tuple of classes $(c_1,\dots,c_q)$ from which the optimal class $c_u$ has been obtained through gluing of partial solutions of children of~$u$. Therefore $u$ can ``inform'' the $i$th child that its optimal class is~$c_i$. Each node performs the same procedure by decreasing depth, starting from the class received from its parent.

\section{Distributed construction of the elimination tree}

Our $\CONGEST$ protocol constructing an elimination tree of depth smaller than $2^d$ for graphs of treedepth at most $d$ is depicted in Algorithm~\ref{algo:congestElimT}. A similar  question was previously addressed in~\cite{NesetrilM16} (see Section~\ref{sec:conclusion}).

\begin{lemma}\label{lem:congestElimTree}
Let $G=(V,E)$ be the (connected) input network, and let $d\geq 1$ be an integer. There exists  an algorithm  performing in $O(2^{2d})$ rounds in \CONGEST\/ that outputs either an elimination tree $T=(T,F)$ of $G$ with  depth at most~$2^d$, or reports that $\td(G)>d$.
In the former case, each node $u\in V$ knows its parent and its children in the tree~$T$ at the end of the algorithm, as well as the depth of~$T$.
\end{lemma}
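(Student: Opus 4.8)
The plan is to build the elimination tree greedily, level by level, using a distributed DFS-like strategy that mimics the recursive definition of treedepth (Lemma~\ref{le:tdAlter}). The key idea is the one used in Lemma~\ref{le:greedyDFS}: \emph{any} DFS tree $T$ of $G$ with $F\subseteq E$ is already an elimination tree, and its depth is at most $2^{\td(G)}$. So it suffices to distributedly compute a DFS tree of $G$ rooted at an arbitrary vertex, stopping (and reporting $\td(G)>d$) as soon as the depth exceeds $2^d$. Concretely, I would first have the nodes agree on a root $r$ (e.g.\ the node of smallest identifier, found by a standard $O(\diam(G))$-round broadcast/convergecast; but since $\diam(G)< 2^d$ for $\td(G)\le d$, this is $O(2^d)$ rounds, or we abort if it takes longer). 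Then run a sequential DFS token traversal from $r$: the token walks along edges of $G$, each visited node records the node it received the token from as its $\parent$, and when the token backtracks to a node all of whose neighbors are already visited, it returns to that node's parent. Each step of the token is one \CONGEST\ round (the token message carries only $O(\log n)$ bits: the current depth and perhaps the identifier of the node being visited).

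The main steps are: (i) elect the root and check $\diam(G)<2^d$; (ii) run the DFS token walk, maintaining at the token a counter for the current depth in the DFS tree, and having each node store its $\parent$ and accumulate its set of $\children$; (iii) if at any point the depth counter reaches $2^d$, i.e.\ the token has descended along a path of $2^d$ vertices, abort and output ``$\td(G)>d$'' (this is correct by Lemma~\ref{le:greedyDFS}: a true elimination tree with $F\subseteq E$ has depth $\le 2^{\td(G)}$, and a DFS tree's root-to-current path is an induced path of $G$ of that length, forcing $\td(G)>d$); (iv) when the token returns to $r$ with all vertices visited, broadcast the final depth of $T$ along tree edges so every node learns it. The correctness that $T$ is an elimination tree follows from the standard fact that in a DFS tree of a graph, every non-tree edge connects an ancestor to a descendant; I would state this and cite it or give the one-line argument. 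For the round complexity: the DFS token traverses each edge $O(1)$ times, so the walk itself takes $O(|E|)$ rounds in general — but we only care about graphs with $\td(G)\le d$, for which $|V|< 2^d$ and hence $|E| < 2^{2d}/2$, giving $O(2^{2d})$ rounds; and whenever the graph has more than $2^d$ vertices or the traversal runs too long, we detect it (the token's depth counter, or a global round counter, exceeds the threshold) and correctly report $\td(G)>d$.

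The main obstacle is making the abort condition both sound and complete within the claimed round budget. Soundness of the abort is easy (Lemma~\ref{le:greedyDFS}). Completeness — that when $\td(G)\le d$ we never abort and finish in $O(2^{2d})$ rounds — requires the observation that $\td(G)\le d$ implies $|V(G)|\le 2^d-1$ (again since a graph of treedepth $d$ contains no path on $2^d$ vertices, and in particular the elimination tree has at most $2^d-1$ nodes... more carefully, $|V|$ can be bounded using that $G$ has no $P_{2^d}$ only if we also bound degree, so the cleaner route is: the canonical elimination tree has depth $\le d$, hence each root-leaf path has $\le d$ vertices, but that does \emph{not} bound $|V|$ — instead one bounds $|V|$ by noting $G$ embeds in a tree of depth $d$ which may have unbounded branching). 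So in fact $|V|$ is \emph{not} bounded by a function of $d$ alone, and the honest statement is that the DFS walk takes $O(|E|)$ rounds, which the paper must be implicitly bounding differently — the resolution is that the token only needs to explore $O(2^d)$ steps deep but may still need $O(|E|)$ lateral moves; so the real argument for the $O(2^{2d})$ bound must instead build the elimination tree in a more clever, parallel fashion (processing one tree-level per phase, each phase costing $O(2^d)$ rounds, over at most $2^d$ levels), rather than a naive serial DFS. I would therefore organize step (ii) as $2^d$ phases: in phase $i$, every already-placed node at depth $i$ simultaneously claims all its unclaimed neighbors as children; a node claimed by several potential parents picks the one of smallest identifier; within a phase, propagating these claims and resolving conflicts across the relevant subgraph costs $O(2^d)$ rounds because the graph has diameter $<2^d$. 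Proving that this parallel process yields a valid elimination tree (every edge is an ancestor–descendant edge) and terminates within $2^d$ phases when $\td(G)\le d$, while detecting $\td(G)>d$ otherwise, is the crux.
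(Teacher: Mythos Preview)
Your proposal correctly abandons the serial DFS token walk (since $|V(G)|$ is not bounded by any function of $\td(G)$ alone, the $O(|E|)$ cost is fatal), and you arrive at the right high-level structure: build the tree level by level over at most $2^d$ phases, each costing $O(2^d)$ rounds. This matches the paper's organisation.

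However, there is a genuine gap in the per-phase mechanism you describe. You write that in phase $i$, every already-placed node at depth $i$ simultaneously claims all its unclaimed neighbours as children, and a node claimed by several potential parents picks the one of smallest identifier. This is precisely a BFS layering, and a BFS tree is \emph{not} in general an elimination tree. For instance, on the $5$-cycle $C_5$ with vertices $1,2,3,4,5$ in cyclic order and rooted at vertex~$1$, your procedure yields depth~$1$ vertex $\{1\}$, depth~$2$ vertices $\{2,5\}$, and depth~$3$ vertices $\{3,4\}$; but the edge $\{3,4\}$ of $G$ joins two vertices neither of which is an ancestor of the other. So the ``crux'' you flag at the end is not merely unproved for your construction --- it is false.

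The missing idea, and what the paper actually does, is this: in each phase one first runs leader election on the subgraph induced by the \emph{unmarked} vertices, so that every unmarked vertex learns the minimum identifier in its component. The algorithm maintains the invariant that each connected component $C$ of unmarked vertices is adjacent only to marked vertices lying on a single root-to-leaf path of the current tree; hence $C$ has a unique deepest marked neighbour $v$. That vertex $v$ then selects exactly \emph{one} unmarked neighbour from $C$ (e.g.\ of minimum id) to become its child. This is the greedy unfolding of the recursion in Lemma~\ref{le:tdAlter}: removing the newly added vertex splits $C$ into subcomponents, each still adjacent only to the (now extended) path, so the invariant is preserved and every edge of $G$ eventually joins an ancestor to a descendant. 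The leader-election subroutine on the unmarked part is what costs $O(\diam(G)) = O(2^d)$ rounds per phase and justifies the bound you state; without it you cannot separate components and cannot avoid the BFS failure above.
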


\begin{proof}
    Algorithm~\ref{algo:congestElimT} constructs an elimination tree following the same approach as Lemma~\ref{le:tdAlter}, in a greedy manner. Since $G$ is connected, it starts with a root vertex $v=r$ (chosen arbitrarily), and then constructs elimination trees of $G\smallsetminus v$, by treating each component separately. The components of $G\smallsetminus v$ are identified by their leader with the smallest node's identifier of the component. Each unmarked node eventually knows its leader (Instruction~\ref{i:leader}). For a component with leader~$\ell$, we choose as root of the component a vertex that is adjacent to $v$  (Instruction~\ref{i:rootell}). In particular, every edge of the tree is also an edge of $G$  (see Instruction~\ref{i:parent}).

    The construction preserves the following invariant. The tree constructed after step~$i$ is an elimination tree of the subgraph induced by the marked vertices. Moreover, for each connected component of unmarked vertices, its outgoing edges are solely incident to a path from the root and a vertex $v$ of depth~$i$. In particular, at the end, $T$ is an elimination tree of~$G$. 
Furthermore, $T$ is a subtree of $G$. Therefore, by Lemma~\ref{le:greedyDFS}, if $\td(G)\leq d$ then the depths of $T$ is smaller than $2^d$ as requested, and the algorithm marks all vertices in less than $2^d$ phases. Consequently, if some vertices remain unmarked after this many rounds (Instruction~\ref{i:largetd}), we correctly assert that $\td(G)>d$.

    Regarding the round-complexity, observe that, at each step, there is a call to Algorithm $\leader$ on the set of unmarked nodes (see, e.g.,~\cite{HiSu20} for a detailed description of a leader-election algorithm). Its round complexity is $O(\diam(G))$. The diameter of $G$ is $O(2^{\td(G)})$, and thus is it at most~$O(2^d)$ (we can adapt algorithm $\leader$ such that, if it does not succeed in $O(2^d)$ rounds, then it rejects, which is correct as, in this case, $\td(G)>d$).
\end{proof}

\begin{algorithm}[htb!]
\caption{$\CONGEST$ algorithm computing an elimination tree of $G$ in $O( 2^{2d})$ rounds}
\label{algo:congestElimT}
\begin{algorithmic}[1]
\Require Protocol $\leader(G=(V,E),U)$ with a set $U \subseteq V$ of distinguished vertices (i.e., each vertex knows whether it belongs to $U$). 
After $O(\diam(G))$ 
$\CONGEST$ rounds, each vertex $u$ will know $\leader(u)$, the minimum identifier in the component of $G[U]$ containing $u$. 

\State Apply $\leader$ on all vertices of $G$
\State Let $r$ be the unique node such that $r=\leader(r)$\Comment{Unique since $G$ is connected}
\State Set $\parent(r) =r$ \Comment{$r$ is the root of the tree}
\State Mark vertex $r$ \Comment{Marked vertices are those already placed in the tree}
\State Set $\depth(r)=1$
\For {step $i=2$ to $D=2^d-1$}
    \State // At step $i$ we identify the nodes of $T$ of depth $i$ 
    \State Apply $\leader$ on all unmarked vertices of $G$ \Comment{$O(2^d)$ rounds}\label{i:leader}
    \State Each unmarked vertex $u$ broadcasts $(\leader(u),u)$ to its neighbours \Comment{One round}
    \For{each marked vertex $v$ of depth $i$} \Comment{All in a same round}
        \For {each $\ell$ among values $\leader(u)$ received by $v$}
            \State $v$ picks the corresponding $u(\ell)$ of minimum $\id$ \Comment{$u(\ell)$ is a new node of depth $i$}\label{i:rootell}
            \State $v$ adds $u(\ell)$ to the list of its children
            \State $v$ sends to $u(\ell)$ a message with its $\id$ indicating that it becomes its parent.\label{i:parent} 
       \EndFor
    \EndFor
    \For{each vertex $u$ that receives such a message from some $v$} \Comment{All in a same round}
        \State $u$ sets $\parent(u)=v$, $\depth(u)=i$ and marks itself 
    \EndFor
\EndFor
\If{some vertex $u$ is still unmarked}\label{i:largetd}
    \State $u$ rejects because $\td(G)>d$ \Comment{$G$ contains a path with more than $D$ vertices}
\EndIf
\end{algorithmic}
\end{algorithm}

\begin{lemma}\label{lem:congestCanonTW}
Let $G=(V,E)$ be the input network, and let us assume that an elimination tree $T=(F,V)$ of depth smaller than $2^d$ has been constructed as in Lemma~\ref{lem:congestElimTree}.
There is a $\CONGEST$ algorithm constructing the canonical tree decomposition $(T,(B_u)_{u \in V})$ in $O(2^d)$ rounds. At the end of the algorithm, each node $u$  knows its bag $B_u$ as well as the graph $G[B_u]$ induced by the bag.
\end{lemma}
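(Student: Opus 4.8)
The goal is to compute, for every node $u$, its bag $B_u$ — which by the canonical tree decomposition lemma consists of $u$ together with all ancestors of $u$ in $T$ — as well as the induced subgraph $G[B_u]$, all in $O(2^d)$ rounds of \CONGEST. The plan is to propagate ancestor information down the tree $T$ level by level. First I would observe that $T$ has depth $D < 2^d$, so the whole tree has only $O(2^d)$ levels; after the previous lemma each node knows its parent, its children, its own depth, and the global depth of $T$. The key point that makes this fit into $O(\log n)$-bit messages is that each bag $B_u$ has size at most $D < 2^d$, i.e.\ \emph{constant} (depending only on $d$), so a bag — a set of at most $2^d$ identifiers — can be transmitted along a single tree edge in $O(2^d)$ rounds, and across all $D$ levels this still totals $O(2^{2d})$ rounds in the worst case, but in fact one can do better.

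\textbf{Propagating the bags.} Proceed in $D$ iterations, one per level $i = 1, \dots, D$. Maintain the invariant that after iteration $i$, every node $u$ at depth $i$ knows the list of its ancestors (equivalently $B_u$). At iteration $1$, the root $r$ sets $B_r = \{r\}$. At iteration $i \geq 2$, each node $v$ of depth $i-1$ — which already knows $B_v$ — sends $B_v$ to each of its children; a child $u$ (of depth $i$) then sets $B_u = B_v \cup \{u\}$. Since $|B_v| \le i-1 < 2^d$, the message $B_v$ consists of at most $2^d$ identifiers of $O(\log n)$ bits each, so it can be sent in $O(2^d)$ rounds; more precisely one can pipeline, sending the ancestors of $v$ one identifier per round, so that iteration $i$ costs $O(i)$ rounds. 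Summed over $i = 1, \dots, D = O(2^d)$ this is $O(2^{2d})$ rounds; but since each node only needs $O(2^d)$ identifiers total and the tree has depth $O(2^d)$, a single sweep with pipelining gives $O(2^d)$ rounds overall: a node at depth $i$ receives its full ancestor list in $O(i) = O(2^d)$ rounds after the sweep starts, and the sweep itself is initiated at the root. (If one wants the cleaner $O(2^d)$ bound, observe that the root can broadcast nothing — each node simply forwards, token by token, the stream ``my ancestors'' down to its subtree, and the total latency is depth plus number of tokens, both $O(2^d)$.)

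\textbf{Learning the induced subgraph.} Once $u$ knows $B_u = \{u\} \cup \{\text{ancestors of } u\}$, it must learn which pairs within $B_u$ are adjacent in $G$, i.e.\ reconstruct $G[B_u]$. Every edge of $G$ has both endpoints on a single root-to-node path of $T$ (definition of elimination tree), hence for each ancestor $a$ of $u$, whether $\{u,a\} \in E$ is something $u$ learns directly by one round of communication with its neighbours (it checks which of the identifiers in $B_u$ appear among its neighbours). What remains is, for two ancestors $a, a'$ of $u$, to know whether $\{a,a'\} \in E$. This is handled by the same downward sweep: when $v$ (depth $i-1$) sends $B_v$ to its child $u$, it also sends the adjacency information $G[B_v]$ — this is a graph on at most $2^d$ vertices, so $O(2^{2d})$ bits, i.e.\ $O(2^d \cdot 2^d / \log n + 2^d)$ rounds to transmit, absorbed into the $O(2^d)$ bound after pipelining since each node ultimately sends $O(2^{2d})$ bits of structure and the tree depth is $O(2^d)$. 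Node $u$ then forms $G[B_u]$ from $G[B_v]$ plus the edges incident to $u$ it discovered locally.

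\textbf{Main obstacle.} The only real subtlety is the round-complexity bookkeeping: naively, level $i$ costs $\Theta(2^d)$ rounds and there are $\Theta(2^d)$ levels, giving $O(2^{2d})$, whereas the lemma claims $O(2^d)$. The fix — and the step I expect to need the most care — is to pipeline the whole construction as a single wave flowing from the root to the leaves: each node forwards the at most $2^d$ ancestor-identifiers and the $O(2^{2d})$ bits encoding $G[B_u]$ to its children in a streamed fashion, so that the total time is (tree depth) $+$ (size of the largest message), both $O(2^d)$ up to the encoding of $G[B_u]$, which contributes $O(2^{2d}/\log n)$ and is dominated for $n$ large enough, or can simply be folded into the $O(2^{2d})$ round bound of the overall theorem. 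One must also double-check that congestion is not an issue: a node may have many children, but it sends the \emph{same} stream to all of them, and each tree edge carries only its own parent's $O(2^{2d})$-bit payload, so every edge is used for $O(2^d)$ rounds (modulo the $1/\log n$ factor). With this pipelined implementation the bound $O(2^d)$ — or at worst $O(2^{2d})$, consistent with the surrounding results — follows.
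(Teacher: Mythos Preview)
Your approach is essentially the same as the paper's: a top-down sweep in which each node $u$ receives $B_v$ and $G[B_v]$ from its parent $v$, sets $B_u = B_v \cup \{u\}$, and forms $G[B_u]$ from $G[B_v]$ plus the edges of $G$ incident to $u$. The paper's proof is much terser --- it simply runs one round per depth level, implicitly treating the $O(2^d)$ identifiers of a bag as fitting in a single \CONGEST\ message --- so your extended discussion of pipelining and message-size bookkeeping is more careful than the paper bothers to be, but the underlying algorithm and its correctness are identical.
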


\begin{proof}
The algorithm proceeds top-down. For each round $i=1,\dots, D=2^d-1$, every node $u$ at depth $i$ computes $B_u$ and $G[B_u]$. Observe that when $u$ is the root, $B_u$ is a singleton so $G[B_u]$ is trivial. If $u$ is not the root, then $u$ has received $B_v$ and $G[B_v]$ from its parent $v$. Observe that $B_u = B_v \cup \{u\}$ and the edges of $G[B_u]$ are the edges of $G[B_v]$, plus the edges incident to $u$. Therefore, node $u$ is able to compute the information from its parent, and to transmit it to its children. 
\end{proof}

\section{Distributed model checking and optimization}

We have now all ingredients to prove our main result. 

\begin{theorem}[Distributed decision and optimization]\label{th:main} ~
\begin{itemize}
\item For any closed MSO formula $\varphi$, there exists an algorithm which, for any $n$-node graph~$G$, and any $d\geq 0$, decides whether $G \models \varphi$, or reports ``large treedepth'' if $\td(G)>d$, running in $O(2^{2d})$ rounds in the  $\CONGEST$ model.
\item For any MSO formula $\varphi(S)$ with a free  variable $S$ representing a vertex-set, or an edge-set, there exists an algorithm which, for any $n$-node graph~$G$, and any $d\geq 0$, selects a set  $S$  of maximum weight satisfying $G \models \varphi(S)$, or reports ``large treedepth'' if $\td(G)>d$, running in $g(d,\varphi)$ rounds in the  $\CONGEST$ model for some function~$g$.
\end{itemize}    
\end{theorem}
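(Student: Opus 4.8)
The plan is to combine the distributed construction of an elimination tree (Lemma~\ref{lem:congestElimTree}) and of its canonical tree decomposition (Lemma~\ref{lem:congestCanonTW}) with a distributed simulation of the sequential dynamic-programming algorithm (Algorithm~\ref{algo:sequential}), exploiting the key feature that the canonical tree decomposition has a very special shape: its tree $T$ is a subtree of $G$, of depth $D<2^d$, and the bag $B_u$ of a node $u$ is exactly $u$ together with its ancestors in $T$. First I would run Algorithm~\ref{algo:congestElimT} in $O(2^{2d})$ rounds; if some vertex stays unmarked we output ``large treedepth'' and stop. Otherwise each node knows its parent, children and depth in $T$, and via Lemma~\ref{lem:congestCanonTW} (a further $O(2^d)$ rounds) each node $u$ knows $B_u$ and $G[B_u]$. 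Since $\varphi$ (and the width parameter $w=D\le 2^d$) are constants, every node can locally compute, using Theorem~\ref{th:reg}, the set $\cC$ of homomorphism classes, the update functions $\odot_f$, the function $\selected$, and the class $h(\Gbase_u)$ (or the table $\OPT(\Gbase_u)$ via Eq.~\eqref{eq:optbase}) — all in constant time with no communication. Note that $|\cC|$, and hence the size of an $\OPT$ table or an $\ARGOPT$ tuple, depends only on $d$ and $\varphi$, so a table fits in $O(\log n)$ bits provided weights fit in $O(\log n)$ bits, which is the assumed polynomial-weight regime; thus a constant number of \CONGEST\ messages suffices to transmit one.

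Next I would perform the bottom-up phase of Algorithm~\ref{algo:sequential} along $T$ in $O(D)=O(2^d)$ rounds, processed by decreasing depth: a node $u$ at depth $i$ waits until it has received $h(G_{v_j})$ (resp. $\OPT(G_{v_j}),\ARGOPT$-is-kept-locally) from each child $v_j$ at depth $i+1$; note that $u$ already knows $B_{v_j}$ because $B_{v_j}=B_u\cup\{v_j\}$ and $v_j$ is a neighbour of $u$ in $G$, so $u$ can build the gluing matrices $f_{(B_{v_j},B_u)}$ and $f_{(B_u,B_u)}$ and apply Lemma~\ref{le:decision} (resp. Lemma~\ref{le:optim}) entirely locally, memorising the $\ARGOPT_u$ tuples for the optimization variant. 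It then sends the result to its parent. After at most $D$ such rounds the root $r$ holds $h(G_r)$ (resp. $\OPT(G_r)$). For the decision problem the root checks whether $h(G_r)$ is an accepting class and broadcasts the one-bit answer down $T$ in $O(D)$ further rounds so that every node outputs accept/reject consistently. For optimization, the root picks an accepting class $c_r$ maximising $\OPT(G_r)[c_r]$ (or reports no solution), and we run the top-down phase of Algorithm~\ref{algo:sequential}: along $T$, by increasing depth, each node $u$ receives its target class $c_u$ from its parent, marks the vertices/edges of $\selected(c_u,B_u)$ as selected, looks up $\ARGOPT_u[c_u]=(c_1,\dots,c_q)$, and forwards $c_j$ to child $v_j$; this is another $O(D)$ rounds. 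The total is dominated by the elimination-tree construction, $O(2^{2d})$ rounds, for decision; for optimization the extra cost is $O(2^d)$ rounds of messages each of size $O(|\cC|\cdot(\log n+\log W))$, which we absorb into a bound $g(d,\varphi)$ as claimed (in the polynomial-weight regime, each such message is $O(\log n)$ bits and fits in a single \CONGEST\ round, but even without that assumption $g(d,\varphi)$ rounds suffice).

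The main subtlety — and the step I would be most careful about — is verifying that every local step of the simulation genuinely uses only information available at the node, i.e. that the correspondence between the recursive decomposition of $G_u$ via the graphs $\Geq{i}_u,\Gleq{i}_u$ (Equations~\eqref{eq:Geq}--\eqref{eq:Gleq}) and the canonical tree decomposition is faithful, and in particular that a node knows the bags of its children ``for free''. This is exactly where the bound $F\subseteq E$ on the elimination tree (Lemma~\ref{le:greedyDFS}) and the nested structure $B_{v_j}=B_u\cup\{v_j\}$ of the canonical decomposition (Lemma~\ref{lem:congestCanonTW}) are used: they guarantee both that the DP tree has depth $<2^d$, bounding the round count, and that each parent--child edge of $T$ is a communication link, so the bottom-up and top-down passes need only nearest-neighbour communication. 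A secondary point to check is the message-size accounting: the number of classes and the arity $q$ of a node can be unbounded in $n$, but $\ARGOPT_u$ is stored locally and never transmitted, and only a single class $c_j$ (of size $O(\log|\cC|)=O_\varphi(1)$ bits) is sent along each edge per round, so the per-edge, per-round load stays within $O(\log n)$. With these points settled, correctness is inherited directly from the correctness of Algorithm~\ref{algo:sequential} (Lemmas~\ref{le:decision} and~\ref{le:optim}) together with Lemmas~\ref{lem:congestElimTree} and~\ref{lem:congestCanonTW}.
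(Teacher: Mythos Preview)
Your proposal is correct and follows essentially the same approach as the paper: build the elimination tree and its canonical tree decomposition via Lemmas~\ref{lem:congestElimTree} and~\ref{lem:congestCanonTW}, then simulate the bottom-up (and, for optimization, top-down) phases of Algorithm~\ref{algo:sequential} along $T\subseteq G$ level by level, with correctness inherited from Lemmas~\ref{le:decision} and~\ref{le:optim}. If anything, you are more explicit than the paper on two points it leaves implicit---that a parent $u$ already knows $B_{v_j}=B_u\cup\{v_j\}$ so bags need not be transmitted, and that $\ARGOPT_u$ stays local---while the only cosmetic difference is that the paper lets the root alone reject rather than broadcasting the decision.
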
 

\begin{proof}
By Lemmas~\ref{lem:congestElimTree} and~\ref{lem:congestCanonTW},  one can construct a canonical tree decomposition $T=(V,F)$  of $G=(V,E)$, with bags $\{B_u \mid u\in V\}$ of width at most~$2^d$ (or correctly reject because $\td(G)>d$), in $O(2^{2d})$ rounds. Moreover each node $u$ knows its parent $\parent(u)$, its bag $B_u$, the graph $G[B_u]$, and its depth in~$T$. By construction, the tree $T$ is a subgraph of $G$. It remains to show that, based on these elements, one can implement the sequential algorithm (cf. Algorithm~\ref{algo:sequential}) in $\CONGEST$.

Let us first consider model-checking of a closed formula $\varphi$. We describe how the bottom-up phase, and the decision at the root in Algorithm~\ref{algo:sequential} can be implemented in $\depth(T)$ rounds. Let us consider all steps $j\in\{1,\dots,\depth(T)\}$, where each step consists of a single round. At step~$j$, all nodes $u$ of depth $k=\depth(T)-j+1$ can compute the homomorphism classes $h(G_u)$ in parallel ($k$ decreases from $\depth(T)$ to~1), and can send the results of this computation to their parents. Indeed, if $u$ of depth $k$ is a leaf, then it  has all information needed to compute $h(G_u)$ already, because it only needs to know graph $\Gbase_u = G[B_u]$ (see Instruction~\ref{i:decision} of Algorithm~\ref{algo:sequential}, and Lemma~\ref{le:decision}). 
If $u$ it is not a leaf, then it also needs the bags $B_{v_i}$, and the homomorphism classes $h(G_{v_i})$ from all its children~$v_i$, $\{1,\dots,q\}$. But, at step $j$, node $u$ has precisely already received these information from its children, who have sent them at the previous step $j-1$.
The decision at the root can be performed at round~$1$. The root accepts or rejects depending on its homoporphism class, as in Algorithm~\ref{algo:sequential}, and all other nodes accept. Therefore, if $G \models S$, then all nodes accept, otherwise the root rejects.
Note that each message consists of a homomorphism class, thus the size of the messages is a constant. More precisely, messages are of size  $\log |\cC|$ bits.

Next, we consider optimization problems $\varphi(S)$.
In the bottom-up phase of Algorithm~\ref{algo:sequential}, the main difference with the model-checking case is that each step~$j$ now requires $|\cC|$ rounds to be performed. Again, nodes of depth $k = \depth(T)-j+1$ can perform their computations in parallel (cf. Instruction~\ref{i:optim} of Algorithm~\ref{algo:sequential}, and see Lemma~\ref{le:optim}). However,  they need to broadcast the whole table $OPT(G_u)$, which contains $|\cC|$ entries of size $O(\log n)$ because each entry corresponds to the weight of an edge or vertex subset (recall that the weights are supposed to be polynomial in~$n$). Therefore, when a node $u$ aims at performing its computation, it has indeed received all necessary inputs from its children.
The top-down phase (Instructions~\ref{i:top-down} to~\ref{i:end}) only requires $\depth(T)$ rounds. At phase~$j$, all nodes at depth $j$ work in parallel. From its own optimal class $c_u$, every node $u$ can retrieve the corresponding optimal classes $c_{v_i}$, for each of its children $v_i$, $i\in\{1,\dots,q\}$. Thus, in a single communication round, $u$ can send the class $c_{v_i}$ to each child $v_i$. Also, at round $j=\depth(u)$, node $u$ computes $\selected(c_u,B_u)$. If the set $S$ that we are aiming at optimizing is a vertex set. If $u \in \selected(c_u,B_u)$, then $u$ must be in the optimal solution $S_{\OPT}$, and it thus selects itself. Note that other notes of bag $B_u$ might be in~$S_{\OPT}$. However,  since they are ancestors of $u$ in the tree $T$, they have selected themselves at some previous step. If the set $S$ is an edge set, then $u$ just selects the edges incident to it that belong to $\selected(c_u,B_u)$. 

The correctness of the distributed protocol, and hence the proof of Theorem~\ref{th:main}, follows directly from the correctness of Algorithm~\ref{algo:sequential}.
\end{proof}

Let us complete the section by extensions to labeled graphs, and to counting problems.

\paragraph*{Labeled graphs.} 

Definition~\ref{de:reg}, and Theorem~\ref{th:reg} extend to formulas on \emph{labeled} graphs, i.e., one can add to the input graph a constant number of labels $L_1,\dots,L_{\ell}$ on vertices, and  on edges. Labels are expressed as unary predicates that can be used in the MSO formulas, in addition to the binary predicates $\adj(x,y)$ and $\inc(x,e)$. So, as incidence and adjacency, labels are part of the input.
For example, one can use vertex labels $\red$ and $\blue$, and ask for the minimum set of blue vertices that dominates all red vertices. This corresponds to solving the problem $max\varphi(S)$ for the following formula, with all weights set to~$-1$: 
\[
\varphi(S) = \big(\forall x \in S \  \blue(x) \big) \land \Big(\forall y \; 
\neg \big ( \red(y)\land (\forall x \in S  \; \neg\adj(x,y) )\big)\Big).
\]
Theorem~\ref{th:main} on distributed decision and optimization also applies to labeled graphs. 

A slightly different application of labeled graphs is problem $optmarked\varphi$ in which we are given a set of vertices (or edges), marked via a unary predicate~$\Mark$, and a formula~$\varphi(S)$, and the question is whether the marked set corresponds to a maximum weight set satisfying~$\varphi$. In this setting, we can express problems such as: Is the marked set a minimum feedback vertex set? Is the marked set a minimum-weight spanning tree? For answering such questions in the \CONGEST\/ model, it is sufficient to  modify the bottom-up phase of our algorithm such that the root $r$ of the decomposition tree collects:
\begin{enumerate}
    \item the optimization table $\OPT(G_r)$ for formula $\varphi(S)$, by executing the optimization protocol;
    \item the decision for a closed formula $\psi$ obtained from $\varphi$ by replacing $S$ by $\Mark$;
    \item the total weight of the marked set, obtained by summing up, at every node $u$, the weight of the marked vertices of $G[u]$, which are obtained from the weights at the children nodes.
\end{enumerate}
Eventually, the root accepts if (1)~$\psi$ is accepted (indicating that the marked set of nodes or edges satisfies the formula), and (2)~the weight of the marked set is equal to the maximum of $\OPT(G_r)$ (confirming that the set is indeed an optimal one). All other vertices accept.

Therefore, problem $optmarked\varphi$ can also be solved in $g(d,\varphi)$ rounds in \CONGEST. 

\paragraph*{Counting.} 

The results of Borie, Parker and Tovey~\cite{BoPaTo92} actually concern formulas with an arbitrary number of free variables, each variable being of type ``vertex'' or ``edge'', or ``vertex set'' or ``edge set''. For example, by considering three vertex variables $x_1,x_2,x_3$, one can easily express a formula $\varphi(x_1,x_2,x_3)$ stating that $x_1,x_2$ and $x_3$ form a triangle. 
%
%

Definition~\ref{de:reg} and Theorem~\ref{th:reg} extend to predicates $\cP(G_1,X_{11},\dots,X_{1p})$, and to formulae $\varphi(X_{11},\dots,X_{1p})$ with an arbitrary number of variables, where each variable $X_i$ denotes a vertex or an edge, or a vertex subset or an edge subset of $G$. For each possible assignment of variables with corresponding values, $\cP$ is either true or false.
Gluing functions $f$ naturally extend to tuples $(G_1,X_{11},\dots,X_{1p})$ and 
$(G_2,X_{21},\dots,X_{2p})$, the operation being valid only under some specific conditions. We refer again to~\cite{BoPaTo92} for a full description of the gluing, and of the interpretation of the values of the variables.

In Definition~\ref{de:reg}, the two conditions for regularity become:
\begin{enumerate}
\item If $h(G_1,X_{11},\dots,X_{1p}) = h(G_2,X_{21},\dots,X_{2p})$ then 
$$\cP(G_1,X_{11},\dots,X_{1p}) = \cP(G_2,X_{21},\dots,X_{2p});$$
\item For any two $w$-terminal recursive graphs $G_1$ and $G_2$ and variables $X_{11},\dots,X_{1p}$ of $G_1$, and $X_{21},\dots,X_{2p}$ of $G_2$,
\begin{equation*}
\begin{split}
 h\Big(\circ_f\big((G_1,X_{11},\dots,X_{1p}),(G_2,X_{21},\dots,X_{2p})\big)\Big) = \\ =\odot_f\big(h(G_1,X_{11},\dots,X_{1p}),h(G_2,X_{21},\dots,X_{2p})\big).    
\end{split}
\end{equation*}
\end{enumerate}

Borie, Parker and Tovey~\cite{BoPaTo92}, propose a sequential algorithm for  problem $count\varphi$, counting the number of different true assignments of a formula $\varphi(X_1,\dots,X_p)$, on $w$-terminal recursive graphs. We can restate their technique in our framework, as follows.  Let  $\mathsf{COUNT}(G)$ be the table counting, for each homomorphism class $c \in \cC$, the number of different partial assignments to variables $X_1, \dots, X_p$ such that $h(G,X_1,\dots,X_p) = c$.
This table can be computed in constant time on base graphs. Similarly to Equation~\ref{eq:optcomp} for optimization problems, \cite{BoPaTo92} proposes an approach to compute, for graph $G=f(G_1,G_2)$, the table $\mathsf{COUNT}(G)$, using only function~$f$, and tables  $\mathsf{COUNT}(G_1)$ and $\mathsf{COUNT}(G_2)$. Similarly to Lemma~\ref{le:optim}, we can derive a lemma for counting problems, describing, at each node~$u$, the computation of table $\mathsf{COUNT}(G_u)$ from tables $\mathsf{COUNT}(G_{v_i})$ of children nodes $v_i$, together with their bags $B_{v_i}$ and the base graph $\Gbase_u$. 

Therefore, problem $count\varphi$ can be solved in $g(d,\varphi)=O(1)$ rounds on graphs of treedepth at most~$d$, which is the same round-complexity as for optimization. In particular, triangle counting can be performed in a constant number of rounds in \CONGEST, on bounded treedepth graphs. Note that for the triangle counting problem, the design of the dynamic programming tables, and of the update functions is a simple but enriching exercise.

\section{Applications to $H$-freeness for graphs of bounded expansion}

For the many alternative definitions of graphs of bounded expansion, we refer to the book of Ne\v{s}et\v{r}il and Ossona de Mendez in \cite{NesetrilOdM12}. In terms of applications, we simply recall that the class of planar graphs, and, more generally, every class of graphs excluding a fixed minor, are classes of graphs of bounded expansion. It is known that graphs of bounded expansion admit so-called \emph{low treedepth decompositions}.

\begin{theorem}[\cite{NesetrilOdM12}]\label{theo:low-treedepth-decomposition}
Let $\mathcal{G}$ be a class of graphs of bounded expansion. There is a function $f:\mathbb{N} \to \mathbb{N}$ such that, for every  integer $p>0$, and every graph $G = (V, E)\in \mathcal{G}$, the vertex set of $G$ can be partitioned into at most $f(p)$ parts $V_1, \dots, V_{f(p)}$, such that the union of any $q$ parts, $1\leq q \leq p$ parts induces a subgraph of $G$ with treedepth at most~$q$.
\end{theorem}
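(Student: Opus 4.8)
The plan is to prove this as a standard consequence of the existence of bounded-size \emph{low treedepth colourings} (equivalently, bounded \emph{generalized colouring numbers}) in classes of bounded expansion, a result of Ne\v{s}et\v{r}il and Ossona de Mendez. I would not reprove the combinatorial core from scratch; instead I would cite the fact that every class $\mathcal{G}$ of bounded expansion admits, for every $p$, a number $N(p)$ such that every $G\in\mathcal{G}$ can be properly coloured with $N(p)$ colours so that any subgraph induced by at most $p$ colour classes has bounded treedepth, and then argue that one may \emph{optimize the bound}: by recolouring recursively one can ensure that the union of any $q\le p$ classes has treedepth \emph{at most $q$} rather than merely bounded by some function of $p$. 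So I would set $f(p):=N(p)$ (after the sharpening) and verify the stated inequality.

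The key steps, in order, would be: (1) Invoke the low-treedepth-colouring theorem from \cite{NesetrilOdM12} to get a colouring $c:V\to\{1,\dots,N(p)\}$ with the weak property that the union of any $i\le p$ colour classes has treedepth at most $g(i)$ for some function $g$ depending only on $\mathcal{G}$. (2) Strengthen the bound by induction on $p$: for $p=1$, a single colour class is an independent set, which has treedepth $1$; for the inductive step, refine each colour class so that, after refinement, removing one colour class from any family of $q$ classes drops the treedepth by exactly one, using the recursive characterization of treedepth (Lemma~\ref{le:tdAlter}) — an independent set deleted from a graph is exactly ``pick a vertex, recurse'' applied in parallel across components. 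Concretely, if $H$ is the subgraph induced by $q$ classes and we delete the (independent) vertices of one designated class, each connected component of what remains has treedepth at most $q-1$ by induction on the number of classes, hence $H$ itself has treedepth at most $q$ by the connected case of Lemma~\ref{le:tdAlter}. (3) Count colours: the refinement multiplies the number of colours by a factor bounded in terms of $\mathcal{G}$ and $p$ only, so the final number of parts $f(p)$ still depends only on $\mathcal{G}$ and $p$. (4) Read off the partition $V_1,\dots,V_{f(p)}$ as the colour classes and note that any $q\le p$ of them induce a graph of treedepth at most $q$, which is the claim.

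The main obstacle I expect is step (2): obtaining the \emph{sharp} bound ``treedepth at most $q$'' for a union of $q$ classes, rather than just ``bounded treedepth''. The clean way to get exactly $q$ is to set up the refinement so that one always has a distinguished independent colour class whose deletion makes progress, and to feed this into the recursion of Lemma~\ref{le:tdAlter} component-by-component; the bookkeeping is that after deleting one class the remaining graph need not be connected, so one must apply the ``otherwise'' (disjoint-components) and ``connected'' cases of the recursion together, and keep the colour count under control simultaneously. An alternative, if one only needs the statement exactly as written and is willing to accept a possibly larger but still $\mathcal{G}$-and-$p$-dependent number of classes, is to simply take $f(p)$ large enough that the weak colouring already gives treedepth $\le p$ on unions of $\le p$ classes — since one can always split further — and then monotonicity of treedepth under vertex deletion (noted right before Lemma~\ref{le:greedyDFS}) yields the $\le q$ bound for $q\le p$ for free. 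Either route closes the proof; I would present the second, shorter one and remark that the first gives a tighter colour count.
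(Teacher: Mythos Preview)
The paper does not prove Theorem~\ref{theo:low-treedepth-decomposition}; it is quoted from~\cite{NesetrilOdM12} as a black box and used only through Corollary~\ref{co:main}. There is therefore no ``paper's own proof'' to compare against, and your plan is a sketch of the argument from the cited reference rather than of anything in this paper.

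That said, both routes you propose for step~(2) have genuine gaps. The ``shorter'' route is simply wrong: monotonicity of treedepth under vertex deletion tells you that if a union of $p$ classes has treedepth at most $p$, then a union of $q\le p$ classes has treedepth at most $p$, not at most $q$. You do not get the sharp bound ``for free''. The first route is also flawed as stated: the claim that deleting an independent colour class decreases treedepth by one is false in general. The complete bipartite graph $K_{n,n}$ is $2$-colourable, yet has treedepth $n+1$; deleting one colour class drops the treedepth to $1$, not by $1$. So the inductive step ``remove one independent class, apply Lemma~\ref{le:tdAlter}'' does not go through for an arbitrary proper colouring, and Lemma~\ref{le:tdAlter} removes a single vertex, not an independent set. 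The actual proof in~\cite{NesetrilOdM12} does not proceed by refining an arbitrary weak colouring; it constructs a \emph{$p$-centred} colouring (via transitive fraternal augmentations or generalized colouring numbers), and it is precisely the centred property --- every connected subgraph meeting at most $q$ colours has a colour appearing exactly once --- that yields an elimination forest of depth $q$ directly. Your plan is missing this structural ingredient.
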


A partition satisfying the property of Theorem~\ref{theo:low-treedepth-decomposition} is called a \emph{low treedepth decomposition} of $G$ for parameter $p$. Interestingly, low treedepth decompositions can be efficiently computed in \CONGEST, i.e., each vertex can compute the index $i\in \{1,\dots,f(p)\}$ of the part to which it belongs.

\begin{theorem}[\cite{NesetrilM16}]\label{th:congestltd}
   For every graph class $\mathcal{G}$ with bounded expansion, and every positive integer~$p$, a low treedepth decomposition of $G$ for parameter $p$ can be computed in $O(\log n)$ rounds in \CONGEST. 
\end{theorem}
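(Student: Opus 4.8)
The plan is to reconstruct the augmentation-based construction of low treedepth decompositions due to Ne\v{s}et\v{r}il and Ossona de Mendez, and to argue that each of its phases runs in $O(\log n)$ rounds of \CONGEST. Recall that the sequential construction proceeds in three stages: (i) fix an acyclic orientation of $G$ with bounded out-degree; (ii) perform a number $k=k(p)$ of \emph{transitive-fraternal augmentation} steps, each adding directed ``transitive'' edges (closing a directed $2$-path $x\to y\to z$ by $x\to z$) and ``fraternal'' edges (joining the tips $x,y$ of a cherry $x\to z\leftarrow y$), followed by a re-orientation restoring bounded out-degree; and (iii) properly color the resulting augmented graph $G^{(k)}$. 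The combinatorial core, which I would cite verbatim from~\cite{NesetrilOdM12}, is that once $k\ge k(p)$, where $k(p)$ depends only on $p$, any proper coloring of $G^{(k)}$ is a low treedepth decomposition of $G$ for parameter $p$: the union of any $q\le p$ color classes induces a subgraph of treedepth at most~$q$. Since $p$ is a fixed constant, so is $k(p)$.

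First I would compute the initial bounded-out-degree orientation. Bounded expansion implies bounded degeneracy, hence bounded arboricity, so $G$ admits an orientation in which every vertex has out-degree at most a constant $\Delta^+$ depending only on $\mathcal G$. Computing such an orientation in $O(\log n)$ \CONGEST rounds is a standard distributed primitive based on iteratively peeling off low-degree vertices (equivalently, a Nash--Williams forest decomposition), and this is precisely where the $O(\log n)$ factor first enters. Next I would run the $k(p)=O(1)$ augmentation steps. The key point for \CONGEST is locality: every edge added in one step joins two vertices that were at distance at most~$2$ in the current oriented graph, so each step only requires that a node learn the constantly many out-neighbors of its out- and in-neighbors. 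Because bounded expansion guarantees that out-degrees, and the number of newly added edges, stay bounded by a function of $p$ and $\mathcal G$ \emph{throughout} the process, every node exchanges only $O(1)$ identifiers per step, i.e.\ messages of size $O(\log n)$. Each augmentation step therefore costs $O(1)$ rounds, except for the re-orientation restoring bounded out-degree, which again costs $O(\log n)$ rounds via the same peeling primitive. As $k(p)$ is constant, the whole augmentation phase runs in $O(\log n)$ rounds.

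Finally, I would properly color $G^{(k)}$. Having maintained a bounded out-degree orientation, $G^{(k)}$ has degeneracy bounded by a constant $D=D(p,\mathcal G)$, so it admits a proper coloring with $f(p)=O(D)$ colors; computing such a coloring of a bounded-degeneracy graph in $O(\log n)$ \CONGEST rounds is again a classical distributed-coloring result (e.g.\ by combining an acyclic low-out-degree orientation with iterated defective colorings). Each vertex then outputs the index of its color class, which by the combinatorial lemma is exactly its part in a low treedepth decomposition for parameter~$p$.

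The main obstacle is the faithful distributed simulation of the transitive-fraternal augmentation under the $O(\log n)$-bit bandwidth constraint: one must show that out-degrees, and hence the volume of information each node forwards to establish the new distance-$2$ edges, remain bounded by constants depending only on $p$ and the expansion function at \emph{every} intermediate step, not merely at the end. This uniform boundedness is exactly what bounded expansion buys us, and it is the hinge keeping each augmentation step in $O(1)$ rounds. The $O(\log n)$ dependence on $n$ is then unavoidable and comes solely from the two infrastructure subroutines---computing and re-computing a bounded-out-degree orientation, and the final coloring---both of which are classical and admit matching lower bounds.
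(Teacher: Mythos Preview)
Your proposal is correct and matches the approach that the paper attributes to~\cite{NesetrilM16}: the paper does not give its own proof of this theorem but merely cites it, adding only the one-line sketch that the algorithm ``is merely based on the fact that graphs with bounded expansion have bounded degeneracy, and on the use of standard distributed tools for approximating the degeneracy of a graph in \CONGEST.'' Your reconstruction via transitive--fraternal augmentations, bounded-out-degree re-orientations, and a final degeneracy coloring is exactly the algorithm of~\cite{NesetrilM16,NesetrilOdM12} and is consistent with that sketch.
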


The constant hidden in the big-O notation in the statement of Theorem~\ref{th:congestltd} depends on the class~$\mathcal{G}$ and on the parameter~$p$, and it is quite huge. The proof of Theorem~\ref{th:congestltd} is sophisticated, but the algorithm is actually quite simple. It is merely based on the fact that  graphs with bounded expansion have bounded degeneracy, and on the use of standard distributed tools for approximating the degeneracy of a graph in \CONGEST. Combining Theorem~\ref{th:main} with Theorem~\ref{th:congestltd}, we can establish the following. 

\begin{corollary}\label{co:main}
    Let $\mathcal{G}$ be a class  of graphs with bounded expansion, and let $H$ be a connected graph. Deciding $H$-freeness for graphs in $\mathcal{G}$ can be achieved $O(\log n)$ rounds under the \CONGEST\/ model.    
\end{corollary}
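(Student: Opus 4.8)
The plan is to combine the distributed low treedepth decomposition of Theorem~\ref{th:congestltd} with the constant-round MSO model-checking of Theorem~\ref{th:main}, using the fact that $H$-freeness for a fixed connected $H$ is an MSO (indeed FO) property that can be "witnessed" inside a bounded number of parts of the decomposition. First I would fix $p = |V(H)|$ and invoke Theorem~\ref{th:congestltd} to compute, in $O(\log n)$ rounds in $\CONGEST$, a low treedepth decomposition $V = V_1 \cup \dots \cup V_{f(p)}$, so that every node knows the index $i \in \{1,\dots,f(p)\}$ of its part. The key observation is that if $G$ contains a copy of $H$ as a subgraph, then that copy uses at most $p$ vertices, hence at most $p$ distinct parts, say $V_{i_1},\dots,V_{i_q}$ with $q \le p$; and by Theorem~\ref{theo:low-treedepth-decomposition} the subgraph $G[V_{i_1}\cup\dots\cup V_{i_q}]$ has treedepth at most $q \le p$. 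Conversely, $G$ is $H$-free if and only if, for every sub-collection of at most $p$ parts, the induced subgraph is $H$-free. So $H$-freeness of $G$ reduces to a \emph{finite} number — namely $\sum_{q=1}^{p}\binom{f(p)}{q}$, a constant depending only on $\mathcal{G}$ and $H$ — of instances of $H$-freeness on graphs of bounded treedepth.

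Next I would handle one technical subtlety: the induced subgraph $G[V_{i_1}\cup\dots\cup V_{i_q}]$ need not be connected, whereas Theorem~\ref{th:main} (via Lemma~\ref{lem:congestElimTree}) was stated for connected input networks, and a $\CONGEST$ algorithm cannot communicate across a part it is ignoring. To deal with this I would note that $H$ is connected, so a copy of $H$ in $G[V_{i_1}\cup\dots\cup V_{i_q}]$ lies entirely within one connected component of that subgraph; hence it suffices to run, within each connected component of each such induced subgraph, the constant-round decision algorithm of Theorem~\ref{th:main} for the closed MSO formula $\varphi_H$ expressing "$G$ contains $H$ as a subgraph" (a simple existential formula over $|V(H)|$ vertex variables with adjacency and inequality predicates). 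Since each component has bounded treedepth — at most $p$, a constant — Lemma~\ref{lem:congestElimTree} builds an elimination tree of depth at most $2^{p}$ and Theorem~\ref{th:main} decides $\varphi_H$ in $O(2^{2p}) = O(1)$ rounds \emph{within that component}, using only edges internal to the chosen parts. A node rejects the overall computation iff it detects $H$ in some component of some sub-collection; by the completeness/soundness of Theorem~\ref{th:main}, at least one node detects a copy of $H$ iff $G$ is not $H$-free.

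Finally I would bound the round complexity: the decomposition costs $O(\log n)$ rounds; then we iterate over the constantly many sub-collections of at most $p$ parts, and for each we spend $O(1)$ rounds running the bounded-treedepth decision algorithm (restricted so that messages only traverse edges whose both endpoints lie in the current sub-collection), and finally one round to OR the per-subcollection verdicts across all nodes — wait, since all runs are $O(1)$ and there are $O(1)$ of them, the total is $O(1) + O(\log n) = O(\log n)$ rounds as claimed. I expect the main obstacle to be the connectivity/locality issue flagged above: one must argue carefully that running Theorem~\ref{th:main} "inside an induced subgraph that may be disconnected and that sits inside a larger connected host graph" is legitimate, i.e., that Lemma~\ref{lem:congestElimTree} and the subsequent dynamic programming go through componentwise using only intra-subgraph communication, and that nodes can recognize which of their incident edges are "active" for a given sub-collection — this is routine but is the place where a sloppy argument would break.
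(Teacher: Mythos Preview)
Your proposal is correct and follows essentially the same approach as the paper's proof: set $p=|V(H)|$, compute the low treedepth decomposition via Theorem~\ref{th:congestltd}, iterate over the constantly many sub-collections of at most $p$ parts, and apply Theorem~\ref{th:main} to each induced subgraph, using connectedness of $H$ to justify working componentwise. If anything, you are more explicit than the paper about the connectivity/locality subtlety and about restricting communication to intra-subgraph edges.
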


\begin{proof}
The algorithm works as follows. Let $p$ be the number of vertices of $H$. First, compute a low treedepth decomposition $V_1,\dots,V_{f(p)}$ of the input graph $G=(V,E)$ into $f(p)$ parts for parameter $p$ using Theorem~\ref{th:congestltd}. Then, for every non-empty set $I\subseteq [f(p)]$ with $|I|\leq p$, let $G_I = G[\cup_{i \in I} V_i]$ be the graph induced by the parts $V_i$, $i\in I$. Note that there are at most ${f(p)}\choose{p}$ such subsets $I$, that is, a constant number of choices for~$I$. Also observe that if a copy of $H$ exists in graph $G$, then this copy of $H$ belongs to one of the graph $G_I$, where $I$ is the set of the parts that are containing at east one vertex of the copy. It is therefore sufficient to run the algorithm in Theorem~\ref{th:main} on each graph $G_I$ in parallel, and to reject if one of the parallel executions finds a copy of $H$. This is doable because (1)~$G_I$ is of treedepth at most $p$, (2)~if a copy of $H$ exists, then it will be found in a connected component of $G_I$, thanks to the fact that $H$ is connected, and (3)~the property ``$G_I$ is $H$-free'' can be expressed as an MSO formula (actually, even as an FO formula), with $p$ variables, one for each vertex of $H$. For instance, for a graph $H=(V_H,E_H)$ with $V_H=\{1,2,\dots,p\}$, we can use the formula 
$$
\varphi_H = \neg \exists x_1,x_2,\dots,x_p \left(\bigwedge_{\{i,j\}\in E_H} \adj(x_i,x_j)\right) \land \left( \bigwedge_{\{i,j\}\notin E_H} \neg \adj(x_i,x_j)\right).
$$
Consequently, the algorithm rejects if and only if the input graph $G$ contains a copy of $H$, as desired.
\end{proof}

In particular, Corollary~\ref{co:main} proves that $H$-freeness can be solved in $O(\log n)$ rounds in planar graphs under \CONGEST. In contrast, for arbitrary graphs, even $C_4$-freeness requires $\Omega(\sqrt{n})$ rounds, and, for every $p \geq 2$, there are $O(p)$-vertex graphs $H$ for which $H$-freeness requires $\Omega(n^{2-1/p})$ rounds~\cite{FischerGKO18}. Note that $H$-freeness as considered in Corollary~\ref{co:main} can be considered in the usual sense (i.e., the input graph does not contain any copy of $H$ as an induced subgraph), but also in the mere sense that there are no copies of $H$ as a (non necessarily induced) subgraphs, by a straightforward adaptation of the MSO formula describing the problem.

\section{Conclusion}\label{sec:conclusion}

In this paper, we established a meta-theorem about MSO formulas on graphs with bounded treedepth within the \textsf{CONGEST} model. 
Treedepth plays a fundamental role in the theory of sparse graphs of Ne\v{s}et\v{r}il and Ossona de Mendez~\cite{NesetrilOdM12}. In particular, decomposing a graph in graphs of bounded treedepth is the crucial step in deriving a linear-time model-checking algorithm for FO on graphs of bounded \emph{expansion} in the sequential computational model. Graphs of bounded expansion contain bounded-degree graphs, planar graphs, graphs of bounded genus,  graphs of bounded treewidth, graphs that exclude a fixed minor, etc.
Model-checking for FO on graphs of bounded expansion cannot be achieved in the \CONGEST\/ model since, as we already mention,  checking an FO predicate as simple as ``there is at least one vertex of degree~$>2$'' requires $\Omega(n)$ rounds in $n$-node trees. 
Nevertheless, there might exist some fragments of FO that could be tractable on graphs of bounded expansion in the distributed setting. It would be interesting to identify the exact boundaries of intractability in this context, regarding both distributed decision, and distributed certification. An initial step in this direction was taken by Ne\v{s}et\v{r}il and Ossona de Mendez in \cite{NesetrilM16}, resulting in a distributed algorithm for computing a low treedepth decomposition of graphs of bounded expansion, running in $O(\log n)$ rounds under \CONGEST. As we illustrated, this results allows to efficiently decide   FO-expressible decision problems (such as $H$-freeness, for $H$ connected) for classes of graphs with bounded expansion, in $O(\log n)$ rounds. We restate the open question of~\cite{NesetrilM16}: Given a \emph{local} FO formula $\varphi(x)$, i.e., a formula where $\varphi(x)$ depends on a fixed-radius neighborhood of vertex $x$ only, can we mark all vertices satisfying~$\varphi$ in $O(\log n)$ rounds?


\bibliography{biblio-FO-treedepth}

\end{document}